\newcommand\vldbdoi{XX.XX/XXX.XX}
\newcommand\vldbpages{XXX-XXX}
\newcommand\vldbvolume{14}
\newcommand\vldbissue{1}
\newcommand\vldbyear{2020}
\newcommand\vldbauthors{\authors}
\newcommand\vldbtitle{\shorttitle} 
\newcommand\vldbavailabilityurl{URL_TO_YOUR_ARTIFACTS}
\newcommand\vldbpagestyle{plain} 
\newtheorem{example}{Example}
\newtheorem{theorem}{Theorem}
\newtheorem{lemma}{Lemma}
\newtheorem{definition}{Definition}
\newcommand{\nosemic}{\renewcommand{\@endalgocfline}{\relax}}
\newcommand{\dosemic}{\renewcommand{\@endalgocfline}{\algocf@endline}}
\let\oldnl\nl
\newcommand{\nonl}{\renewcommand{\nl}{\let\nl\oldnl}}
\definecolor{Mulberry}{rgb}{0.77,0.29,0.55}
\definecolor{CadmiumOrange}{rgb}{0.93,0.53, 0.18}
\definecolor{ForestGreen}{rgb}{0.13, 0.55, 0.13}
\definecolor{WildStrawberry}{rgb}{0.5, 0.7, 0.2}
\DeclareMathOperator*{\argmax}{arg\,max}
\def\footnoterule{\kern-3\p@
  \hrule \@width 2in \kern 2.6\p@} 
\def\BibTeX{{\rm B\kern-.05em{\sc i\kern-.025em b}\kern-.08em
    T\kern-.1667em\lower.7ex\hbox{E}\kern-.125emX}}
\newcommand{\short}{IMCSN}
\newcommand{\ksubnetwork}{$k$-subnetwork}
\newcommand{\method}[1]{\textsf{#1}}
\newcommand{\rppr}[1]{p^{RP}_{#1}}
\newcommand{\sppr}[1]{p^{SP}_{#1}}
\newcommand{\innermethod}{\textsf{PEI}}
\newcommand{\outtermethod}{\textsf{PSNA}}
\begin{document}


\title{Influence Maximization in Real-World Closed Social Networks}

\author{Shixun Huang$^{\dagger \star}$, Wenqing Lin$^\ddagger$, Zhifeng Bao$^{\dagger *}$, Jiachen Sun$^\ddagger$}
\affiliation{
 \institution{\textsuperscript{$\dagger$}RMIT University, \textsuperscript{$\ddagger$}Tencent}
 \country{\textsuperscript{$\dagger$}\{shixun.huang,zhifeng.bao\}@rmit.edu.au, \textsuperscript{$\ddagger$}\{edwlin,jiachensun\}@tencent.com}
 }


 \begin{abstract}
In the last few years, many closed social networks such as WhatsAPP and WeChat have emerged to cater for people's growing demand of privacy and
independence. In a closed social network, the posted content is not available to all users or senders can set limits on who can see the posted content. 
 Under such a constraint, we study the problem of influence maximization in a closed social network. It aims to recommend users (not just the seed users) a limited number of \emph{existing} friends who will help propagate the information, such that the seed users' influence spread can be maximized. We first prove that this problem is NP-hard. Then, we propose a highly effective yet efficient method to augment the diffusion network, which initially consists of seed users only. The augmentation is done by iteratively and intelligently selecting and inserting a limited number of edges from the original network. Through extensive experiments on real-world social networks including deployment into a real-world application, we demonstrate the effectiveness and efficiency of our proposed method.

 \end{abstract}
  
\maketitle

\section{Introduction}\label{sec:intro}
Social network platforms have been a popular way that allows people to keep in touch with friends and share contents. There are many \emph{open} social networks where the posted content of a user will be available to all followers and even non-followers using search engines. This open sharing model is popular with millennials who are heavily impacted by the Fear Of Missing Out culture and like to put their lives on display and see everything that's going on~\cite{fomo}. 

Recently, \emph{closed} social networks, where sharing is limited to selected persons only, have emerged and are favored by generation Z to cater for privacy issues and information overload in open social networks. At Tencent, the closed sharing model has become a predominant form for sharing information in its numerous applications. For example, users will send a limited number of friends recent news and updates (e.g., COVID-19 statistics and reports of the Olympic winter games) on WeChat, charitable activities on QQ, and event invitations and sales promotion of merchandise on Tencent Shop. Meanwhile, many tech giants (e.g., Facebook, LinkedIn and Pinterest) have started enabling closed sharing models with personalized settings as well~\cite{trend,towardclosed1,towardclosed2,towardclosed3}. Additionally, the closed social network platform WhatsApp recently took the first place from Facebook in downloads~\cite{survery}, which again demonstrates the importance and popularity of the closed sharing model.\let\thefootnote\relax\footnotetext{$^\star$This work was done when the first author did an internship at Tencent.} \let\thefootnote\relax\footnotetext{$^*$Corresponding author.}

\begin{figure}[!t]

\centering
\includegraphics[width=0.2\textwidth]{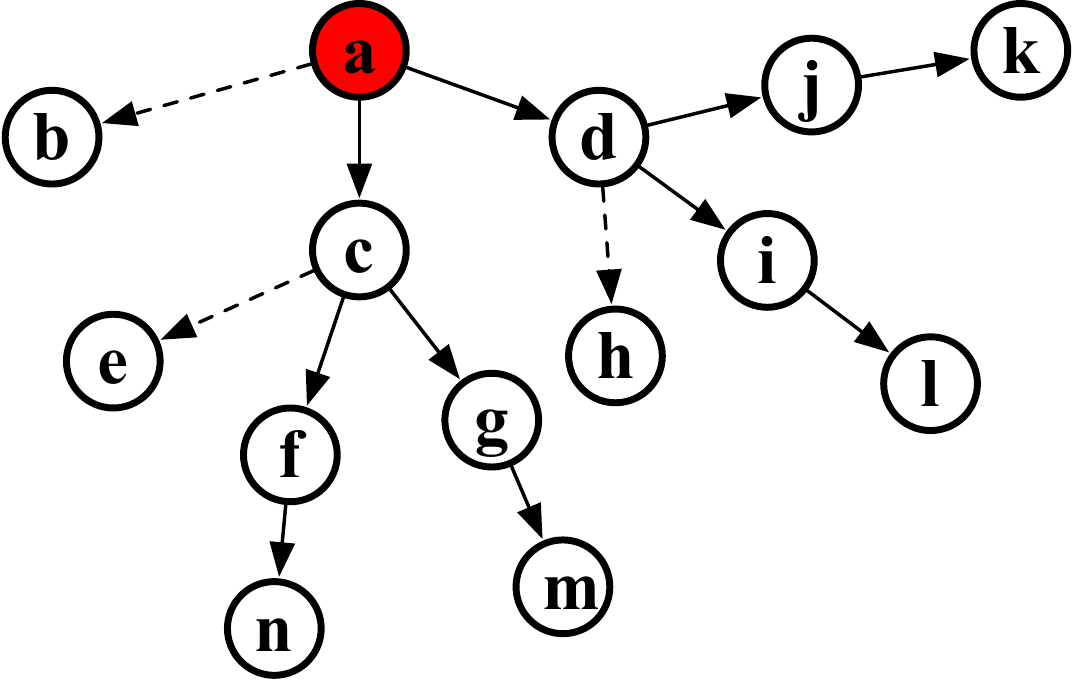}\hspace{-1em}
\caption{The network formed by solid edges is the optimal diffusion network to maximize the influence of $a$ if each user only shares information with up to two friends.}

\label{introexample}
\label{fig:intro}

\end{figure}

On the one hand, in closed social networks, users often do not want to overexpose themselves for various reasons, such as privacy concerns and psychological factors~\cite{expose}, or may have limited sharing opportunities due to resource constraints in events. Therefore, users are likely to share information with only a limited number of friends. 
For example, users in WeChat tend to share their life moments or private matters only with their close friends and families, rather than colleagues with pure business relationships~\cite{block,block2}.
Considering that online users have different capabilities of propagating information, as a consequence, the influence spread (i.e., information propagation effect) of a seed user (i.e., a user who releases the content at the very beginning) heavily depends on the friends she chooses to receive the information. Since the information propagation unfolds in a cascading manner, an information receiver can be multiple hops away from the seed and her selected friends for sharing also impacts the influence spread.

On the other hand, users hope that their released information could reach and benefit most of the users. Therefore, to fully realize the social and commercial value of the information with the greatest exposure in closed social networks, an effective strategy of recommending friends to share information is very important. Since each recommendation corresponds to a directed edge in the network, all recommendations constitute a \textit{diffusion network}, which is essentially a subnetwork of the original network and has limit on the number of outgoing edges (i.e., paths from senders to receivers) incident on each node. 

In this paper, we formulate the above recommendation as the problem of Influence Maximization in Closed Social Networks (\short{}), which aims to find the optimal diffusion network via which the seeds' influence spread is the maximum.

\begin{example}
To make this problem more intuitive, we use an example for illustration. Figure~\ref{introexample} shows a toy network where each user will be recommended a limited number of friends, say two, to share information. Suppose we only consider the information diffusion via the friends recommended by the system and ignore the uncontrollable and unpredictable sharing with friends that are not in the recommendation list. Given the seed user $a$ who has three friends $b$, $c$ and $d$, she will send the information to two of her friends, say $c$ and $d$. Afterwards, her friends who receive the information may also choose two of their friends (e.g., the friends $f$ and $g$ of $c$) to receive this information. This process is repeated such that the information diffuses in cascades. If we assume that users will propagate the received information unconditionally, the diffusion network formed by all solid edges refer to the optimal solution where the information from the seed user $a$ can reach the maximum number of users.
\end{example}

Notably, traditional influence maximization~\cite{Kempe03} in open social networks assumes that a user will send information to all her friends, and aims to select a limited number of seeds with the maximum influence spread. Here, we focus on the closed sharing model and selecting a limited number of edges such that the influence spread of \emph{specific seeds} via these \emph{selected edges} is maximized, which is different from the former. Please refer to Section~\ref{sec:relatedwork} for more details.

It is very challenging to solve the \short{} problem  because the choice space is extremely large.
To solve this problem, we try to insert limited links from the original network into the diffusion network that is initially empty.
Unfortunately, we find that greedily inserting edges with the maximum marginal gains w.r.t. influence of seeds is not effective, due to the non-submodular property of the objective function and the expensive marginal gain computations under the classical Independent Cascade (IC) model~\cite{Kempe03}. Thus, we resort to computing and leveraging influence lower bounds, and by using such lower bounds the submodular property can be preserved. Afterwards, we propose a novel diffusion network augmentation method which consists of two stages, namely the expansion stage and the filling stage. In the expansion stage, we iteratively expand the diffusion network size by incorporating the users and the connections that are important to spreading influence. In the filling stage, we intelligently fill up link recommendations for these involved users. Our contributions are summarized as below:


 \begin{itemize}[leftmargin=*]
\item We make the first attempt to study and formalize the problem of Influence Maximization in \emph{Closed} Social Networks (\short{}), motivated by numerous practical needs in social network platforms such as Tencent. We also prove its NP-hardness (Section~\ref{sec:preliminary}).

\item With an influence lower bound as the quality measurement, we first propose a network augmentation \emph{sketch} that can produce solutions for a single seed user with theoretical guarantees in a given diffusion network (Section~\ref{sec:methodsingle}.1). 
To boost the efficiency, we propose an effective yet scalable network augmentation method to avoid marginal gain computations (Section~\ref{sec:methodsingle}.2).  

\item We make some interesting observations in the influence diffusion of multiple seed users and leverage these observations to transform the \short{} problem for multiple seed users into the one for a single `virtual' seed user. Such a transformation enables us to utilize our method for a single seed user with minor adjustments, to produce effective solutions while maintaining high efficiency.

\item We conduct extensive experiments on real-world social networks including the deployment into a Tencent application (Section~\ref{sec:experiment}). We have several exciting findings: 

\noindent 1) In the \short{} problem, the boosted baselines, built upon the diffusion network produced by the expansion stage of our method, can achieve up to five-orders-of-magnitude larger influence spread than their counterparts built upon the initially empty network. Despite that, our full-stage method, which includes both the expansion and filling stages, significantly beats those boosted baselines. 

 2) Our full-stage method is able to identify important connections for spreading influence -- it is able to build a diffusion network where seed users can achieve 90\% of their full influence in the original network and this diffusion network contains only 36\% of edges from the original one.
 
 3) We deploy our solution into an activity of an online Tencent application where each online user will be recommended some friends for interaction.
 We conduct online A/B testing where each user is randomly assigned to one method which produces the recommendation list for this user. Such interaction can unfold in cascades and further trigger more interaction if the recommendation is effective. The online result shows that the recommendation from our solution which achieves a notably better Click-through Rate than the rest of baselines.

 
  4) In solving the problem of maximizing the seeds' influence over \emph{open} social networks, by recommending limited \emph{new links}, our method achieves up to five-orders-of-magnitude speedup than the state-of-the-art while maintaining competitive effectiveness.

\end{itemize}

\section{Related work}\label{sec:relatedwork}
In this section, we will first describe the difference between closed and open social networks. Afterwards, we will describe the related work on influence maximization in \emph{open} social networks, which can be broadly divided into two categories, influence maximization via node selection and influence maximization via edge insertion, respectively. Then, we will discuss the differences between them and our problem in \emph{closed} social networks.

\smallskip
\noindent \textbf{Open vs. Closed Social Networks}. The two words `open' and `closed' are used to describe the underlying sharing model rather than the topology of the social network. Furthermore, the sharing system applied to the network is decided based on the specific application behind. For example, if Figure~\ref{introexample} describes a subgraph of the Twitter social network, we have an open sharing model by default and a user's post will be available to all online users. On the other hand, if Figure~\ref{introexample} describes a subgraph of the Wechat social network, the closed sharing model will be applied in most cases (as in Example 1) where users will make their messages or posts visible to a limited number of selected friends. For ease of presentation and following the naming convention in this domain~\cite{naming1,naming2}, we directly use `open' and `closed' social networks to refer to the networks with the `open' and `closed' sharing model respectively.

\smallskip
\noindent \textbf{Influence Maximization via node selection}.
This problem refers to the classical influence maximization problem that aims to choose a limited number of seed nodes with the greatest influence spread. Kempe
et al.~\cite{Kempe03} prove the NP-hardness of this problem and propose a Monte Carlo simulation based greedy algorithm which iteratively chooses the node with the greatest marginal gain to the influence. Due to the importance of this problem, many subsequent studies
~\cite{huang2019finding,huang2021capturing,leskovec2007cost,goyal2011celf++,cheng2013staticgreedy,ohsaka2014fast,borgs2014maximizing,nemhauser1978analysis,tang2014influence,goyal2011simpath,jung2012irie,galhotra2016holistic,chen2010scalable,chen2009efficient,chen2010scalable2,wang2017real,bian2020efficient,huang2020efficient,bevilacqua2021fractional,cautis2019adaptive,zhang2020geodemographic,aslay2018influence,cai2020target,huang2019community,liu2019influence,li2019tiptop,he2019tifim,sun2018multi,wu2018two,minutoli2019fast,banerjee2019combim,han2018efficient,tang2015influence} 
have been proposed to further improve the efficiency and/or effectiveness.  They mainly differ in how the influence spread is defined or estimated. To name a few, \citeauthor{leskovec2007cost}~\cite{leskovec2007cost} adopt the Monte-Carlo simluation to measure the influence spread and speed up the process in~\cite{Kempe03} with an early termination technique based on the submodular property of the influence function. \citeauthor{ohsaka2014fast}~\cite{ohsaka2014fast} measure the influence and marginal gain based on a limited number of subgraphs generated by the flipping-coin technique~\cite{Kempe03}, which further improves the efficiency. \citeauthor{borgs2014maximizing}~\cite{borgs2014maximizing} leverage reverse reachable sets to estimate influence and inspires more recent advanced solutions~\cite{tang2014influence,tang2015influence}.

\smallskip
\noindent \textbf{Influence maximization via edge insertion}.
The work falling in this category aims to insert a limited number of edges into the network so as to maximize influence spread of specific seeds. \citeauthor{d2019recommending}~\cite{d2019recommending} study how to maximize the influence of the seeds under the IC model by adding a limited number of edges incident to these seeds. \citeauthor{coro2021link}~\cite{coro2021link} extend the results of ~\cite{d2019recommending} to the Linear Threshold (LT) model~\cite{granovetter1978threshold}. Specifically, they prove that the objective function under the LT model is submodular and leverage this property to propose an approximate algorithm. 
\citeauthor{khalil2014scalable}~\cite{khalil2014scalable} study how to add a limited number of edges to maximize the influence of given seeds under the LT model. 
\citeauthor{chaoji2012recommendations}~\cite{chaoji2012recommendations} study how to add edges to maximize the influence spread of seeds under the constraint that at most $k$ inserted edges are incident on any node.
\citeauthor{yang2019marginal}~\cite{yang2019marginal} and \citeauthor{yang2020boosting}~\cite{yang2020boosting} study how to add a limited number of edges from a candidate set to maximize the seeds' influence under the IC model. \citeauthor{yang2019marginal}~\cite{yang2019marginal} derive a lower and upper bound influence function respectively to approximate the non-submodular influence under the IC model, and they use a sandwich strategy to produce approximate solutions. \citeauthor{yang2020boosting}~\cite{yang2020boosting} derive tighter bounds than \citeauthor{yang2019marginal}~\cite{yang2019marginal} and produce more effective results.

\smallskip 

\emph{Differences}.~The aforementioned studies are drastically different from ours. The main reason is on the problem setting and assumption. 
The studies of the first category focus on selecting \emph{seeds} based on the open social networks where the influence from the seeds is allowed to propagate via \emph{any} edge in the network. On the other hand, we consider the \emph{closed} social networks, and focus on selecting limited \emph{edges} from the original network such that the influence of \emph{specific seeds} via these \emph{selected edges} is maximized.

The studies of the second category focus on inserting a limited number of \emph{new} edges into the \emph{existing} network and assume an \emph{open-sharing} model such that the influence of the seeds will spread via all edges including the inserted ones. 
In contrast, we focus on inserting a limited number of \emph{existing} edges from the original network into an initially empty network and consider a \emph{closed-sharing} model such that the influence of seeds will spread \emph{only} via inserted edges. As a consequence, existing work's decision making of edge insertion is based on the current graph structure and cannot be trivially extended to handle our problem, because the network which requires edge insertion in our problem is initially empty. 
Additionally, due to different assumptions of the sharing model, the space of candidate edges for insertion that is considered by existing work is significantly smaller than that of our problem. As a result, most existing methods will suffer from serious scalability issue even if they could be extended to handle our case. In particular, the infeasibility of extending existing work can be caused by reasons including but not limited to: (1) Most of these studies do not consider the edge insertion constraint where the number of inserted edges sharing the same source node cannot be greater than $k$. (2) The methods in~\cite{d2019recommending,coro2021link} are designed based on the assumption that the source nodes of all edge candidates must be seeds, which is not the case in our problem. (3) The method in~\cite{khalil2014scalable} is specifically designed for the LT model which has drastically different properties from the IC model we are considering. (4) The method in~\cite{chaoji2012recommendations} cannot help the seed nodes to influence the nodes that are 2-hop away in the initially empty network, because the influence path between any pair of nodes is not allowed to contain more than one inserted edge. (5) The methods in~\cite{yang2019marginal,yang2020boosting} hold a strong assumption that the network is acyclic, which is often not true in real-world scenarios.

\section{Problem Formulation}\label{sec:preliminary}
In this paper, we consider both directed and undirected social network where the latter can be transformed into the directed one. In an undirected network, each edge represents a friendship between two users. Since information diffusion is directed, each friendship corresponds to two diffusion directions. Thus, we represent the undirected social network as the directed one $G=(V,E)$ where $V$ ($E$) is the node (edge) set, and each directed edge from $u$ to $v$ is denoted as $(u,v)$. We use $N^G_{in}(u)$ and $N^G_{out}(u)$ to denote the set of incoming and outgoing neighbors of $u$, respectively.

\smallskip
\noindent \textbf{The diffusion model}. We focus on a classic and widely-adopted
information diffusion model -- the {\textit{Independent Cascade}} (IC)
model~\cite{Kempe03}.
It originates from the marketing
literature~\cite{goldenberg2001using} and independently assigns each
edge $(u,v)$ with an influence probability $p_{u,v}
\rightarrow [0,1]$. Given a seed node $s$ being active at time step 0 and the influence probabilities, the diffusion unfolds in discrete steps.
Each active node $u$ in time 
step $t \geq 1$ will have a single chance to activate each
outgoing neighbor $v$, that is inactive in step $t-1$, with a probability of $p_{u,v}$.
If an outgoing neighbor $v$ is activated in step $t$, it
will become active in step $t+1$ and then will have a single chance
to activate each of its inactive outgoing neighbors in the next time step.
The diffusion instance terminates when no more nodes can
be activated.
The \emph{influence spread} $\delta_G(s)$ in a graph $G$ is the expected number of activated nodes with $s$ as the seed node. Note that our proposed methods are specifically designed for the IC model. The extension on other models (e.g., the Linear Threshold model~\cite{Kempe03}) is out of the scope of this work and will be explored in future work.

\smallskip
In this paper, we aim to select at most $k$ outgoing edges for each online user such that the influence spread of seeds via these selected edges is maximized. These directed edges naturally form a diffusion sub-network defined as below. 
  
\begin{definition}[K-subnetwork]
Given a directed network $G=(V,E)$ and an integer $k$, a k-subnetwork $G_k$ is a subgraph of $G$ where there are at most $k$ outgoing neighbors for each $v\in G_k$. Formally, $G_k=(V_k,E_k)$ where $V_k\subseteq V$, $E_k \subseteq E$ and $\forall v \in V_k, |N^{G_k}_{out}(v)|\leq k$.

\end{definition}  
 
For example, in Figure~\ref{introexample}, the network that consists of solid edges only is a 2-subnetwork.
%
Notably, based on industry practice, we make three considerations when formulating our problem (i.e., Definition~\ref{defn:problem}): 1) The value of $k$ is the same for all users and is decided by the event operator. However, our solutions can easily work with the scenario where the value of $k$ depends on specific users. 
 2) Seed users are independent since KOLs (i.e., key opinion leaders) may have unpredictable and different information to spread over a long period, and recommendations based on this consideration help maintain a long-term usage of the recommendation system. 
 3) The friend recommendation list for a user is built by considering all seed users instead of being customized for individuals since (i) it is intractable to know, among all seed users sharing the same content, who will activate online users for further information propagation in real-world scenarios, and (ii) fixing the recommendations help increase the interaction rate between users with a laser focus on facilitating information diffusion.

\begin{definition}[Influence maximization in closed social networks (\short{})]\label{defn:problem}
Given a directed social network $G=(V,E)$, an integer $k$, a set $S$ of independent seed users, we aim to find the optimal diffusion \ksubnetwork{} $G^*_k$ such that the aggregated influence of seed users in $S$ is maximized under $G^*_k$:

$$  G^*_k = \argmax_{G_k \in \mathbb{S}}  \sum_{s \in S} \delta_{G_k} (s) $$ 

\noindent where $\mathbb{S}$ refers to the whole space of all possible k-subnetworks of $G$.
\end{definition}

If $s \notin V_k$,  the influence of $s$ in $G_k$ is 0. When the context is clear, we omit the subscript of $\delta_{G_k}(\cdot)$, and we use the terms `node' and `user', as well as the terms `edge' and `link' interchangeably. 

\begin{figure}[!t]

\centering
\includegraphics[width=0.3\textwidth]{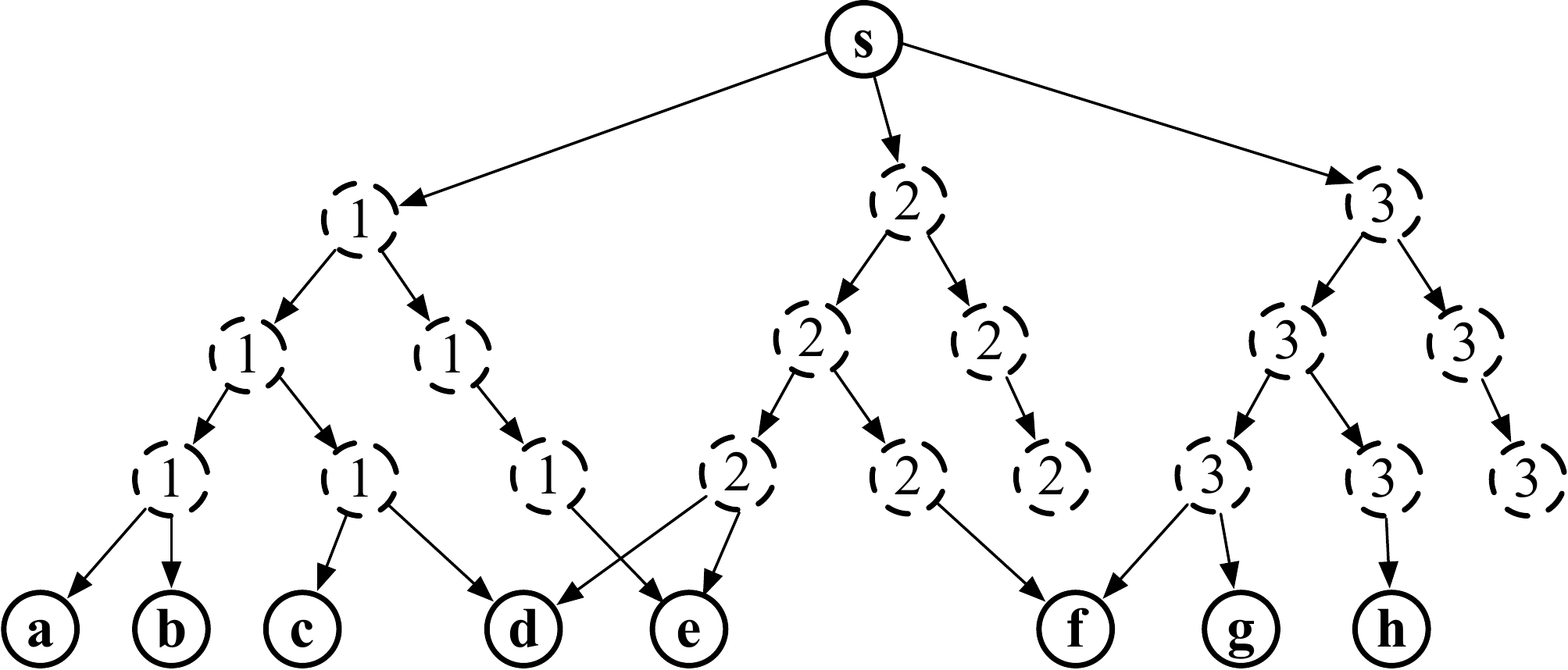}\hspace{-1em}
\caption{An example of a graph constructed based on a set cover instance.} 
\label{fig:tree}
\end{figure}

\smallskip
 \noindent \textbf{Hardness analysis of the \short{}}. We will show that the \short{} problem is NP-hard via a reduction from the set cover problem defined as below.

\begin{definition}[Set Cover]
Given a set $U$ of elements, a collection $C$ of subsets $A_1,A_2,...,A_{|C|}$ of $U$ where $\cup_{1\leq i\leq |C|} A_i = U$, we aim to choose the smallest number of sets from $C$ such that their union is $U$.
\end{definition}

\begin{theorem}\label{proof:nphard}
The \short{} problem is NP-hard.
\end{theorem}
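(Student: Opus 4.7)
The plan is to give a polynomial-time reduction from the decision version of Set Cover (``does $C$ admit a cover of size at most $K$?'') to the decision version of \short{} (``does $G$ admit a $k$-subnetwork in which the seeds' influence spread is at least $T$?''). Given a Set Cover instance $(U, C, K)$, I would build a three-layer directed graph: a single seed $s$, one ``set node'' $a_i$ for each $A_i \in C$, and one ``element node'' $u_e$ for each $e \in U$. I would add directed edges $s \to a_i$ for every $i$, and $a_i \to u_e$ whenever $e \in A_i$, and set every influence probability to $1$, so that an element is activated iff it is reachable via the chosen subnetwork. The \short{} parameters would be $S = \{s\}$, $k = K$, and target $T = 1 + K + |U|$.

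Next I would prove the two directions. For the forward direction, given a cover $\{A_{i_1}, \ldots, A_{i_K}\}$, the subnetwork that keeps only the $K$ edges $s \to a_{i_j}$ together with all edges $a_{i_j} \to u_e$ is a feasible $k$-subnetwork (provided $|A_i| \leq k$ for all $i$; see below) and achieves influence exactly $1 + K + |U|$, since every probability is $1$ and every element is covered. For the reverse direction, any $k$-subnetwork allows $s$ at most $k = K$ outgoing set nodes, and any activated element node must lie in the union of the corresponding sets; thus the influence is at most $1 + K + |U|$, and reaching $T$ forces those sets to form a cover of size exactly $K$ with every $u_e$ activated. Hence the Set Cover instance is a yes-instance iff the constructed \short{} instance is, establishing NP-hardness.

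The main obstacle is the $k$-out-degree constraint on the \emph{set nodes} themselves: if some $A_i$ has $|A_i| > K$, then $a_i$ cannot retain edges to all its elements when $k = K$, and the forward direction breaks. I would handle this by reducing instead from a restricted variant of Set Cover in which every set has size bounded by a small constant (for example, $3$-Set-Cover or Exact Cover by $3$-Sets, both NP-hard). One may further assume $K \geq 3$ by padding with disjoint dummy elements together with the singletons covering them, which every optimal cover must include and which leaves the decision answer unchanged. This guarantees $k = K \geq \max_i |A_i|$, so the three-layer construction sketched in Figure~\ref{fig:tree} carries through without modification.
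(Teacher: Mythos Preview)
Your reduction is correct and runs parallel to the paper's: both reduce from the decision version of Set Cover, use a single seed with deterministic (probability-$1$) edges, and exploit the fact that the $k$-out-degree constraint on $s$ forces a choice of at most $K$ sets. The genuine difference is in how the out-degree constraint on the \emph{set} nodes is neutralised. The paper keeps general Set Cover and, for each set $A_i$, replaces the set node by a $k$-ary tree of ``virtual'' internal nodes whose leaves are the elements of $A_i$; it then pads every such tree to the same number $x$ of virtual nodes, so that whichever $k$ roots are selected, the virtual layer contributes exactly $k\cdot x$ to the influence and the comparison collapses to how many element nodes are reached. You instead move the work into the source problem: by starting from bounded-size Set Cover (e.g.\ X3C or $3$-Set Cover) and padding with dummy singletons to force $K\ge 3$, every set node already has out-degree at most $k$, and the flat three-layer graph suffices.

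Your route is shorter and avoids the virtual-node bookkeeping; the paper's route is self-contained in that it invokes only the classical Set Cover hardness rather than a restricted variant. Both are standard manoeuvres. One cosmetic slip: your pointer to Figure~\ref{fig:tree} is misleading, since that figure depicts the paper's multi-level virtual-node trees, not your flat three-layer construction.
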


\begin{proof}
To show that \short{} is NP-hard, we will perform a reduction from the NP-complete decision problem of set cover, which checks whether we can find $k$ sets from $C$ whose union is $U$, to the special case of \short{} where there is only one seed user $s$.

Given the collection $C$ of a set cover instance, we construct a deterministic graph with a tree structure where the weight of each edge is 1 as below. First, we find the set $A_{max}$ with the greatest size from $C$ and create a tree with elements in $A_{max}$ as the bottom nodes (at the bottom level). In the second to last level, we introduce the set $V'$ of $\lceil |A_{max}|/k  \rceil$ virtual nodes. We next create directed edges from each of the virtual nodes to at most $k$ arbitrary nodes in $A_{max}$ which have not been connected by virtual nodes in $V'$. Afterwards, we create the third to last level based on the second to last level by adopting the similar rules. We repeat this step until the root node is constructed. We also create trees for the rest of the sets in $C$ similarly. One difference is that, when we create the tree for a set $A$, we need to make sure that the number of virtual nodes in each level is consistent with the corresponding level of the tree constructed based on $A_{max}$. It means that some virtual nodes may not have outgoing neighbors. After we create trees for all sets in $C$, we introduce directed edges from $s$ to all root nodes.

Figure~\ref{fig:tree} shows an example where a graph is constructed from the set cover instance where $k=2$ and $C=\{ \{a,b,c,d,e\},\{d,e,f\}, \allowbreak \{f,g,h\} \}$. Dashed nodes refer to virtual nodes. Virtual nodes with the same number come from the tree constructed based on the same set from $C$. Since $A_{max}=\{a,b,c,d,e\}$, the number of virtual nodes in the second to last level of all trees are $\lceil |A_{max}|/k  \rceil =3$. 


Since the time cost for creating such a tree is at most $O(|A_{max}| \allowbreak \log_k|A_{max}|)$, the total reduction process is polynomial in the total size of the sets in the collection. To solve the \short{} problem in the constructed graph, we only need to focus on selecting $k$ outgoing edges for the root node $s$ since the out-degree of the rest of the nodes in the graph is at most $k$ and edges not incident on $s$ can all be chosen. Considering that the trees constructed from each set in $C$ have the same number of virtual nodes, the quality of the outgoing edge selection for $s$ only depends on bottom nodes being reached. Thus, selecting $k$ outgoing edges for $s$ corresponds to selecting $k$ sets from $C$. Suppose the number of virtual nodes in each tree is $x$. If the optimal influence spread is $k\times x +1+|U|$ (including $s$) which is also the maximum possible influence, we have a `Yes' answer to the set cover instance. Otherwise, the answer is `No'. Therefore, if we can find the optimal solution for the \short{} instance in polynomial time, the set cover problem can be solved in polynomial time which is not possible unless P=NP. Thus, the \short{} problem is NP-hard.
\end{proof}




\section{Solving \short{} for a single seed}\label{sec:methodsingle}
To facilitate the illustration of our methodology, we first describe how to solve the \short{} problem for 
a single seed user. In the next section, we will describe how to extend the method to handle multiple seed users. 
Essentially, we try to solve the \short{} problem by inserting edges from the original network into a \ksubnetwork{} which initially contains the seed user only. As mentioned in Section~\ref{sec:intro}, we resort to the influence lower bounds to preserve submodularity, and the lower bounds are computed with the Restricted Maximum Probability Path (RMPP) model~\cite{chaoji2012recommendations} as defined below. 


\begin{definition}[Influence Probability of a path]
The influence probability of a path is the product of influence probabilities of all edges in this path.
\end{definition}

\begin{definition}[Restricted Maximum Probability Path (RMPP)~\cite{chaoji2012recommendations}]\label{defn:rmpp}
Given a \ksubnetwork{} $G_k$ where edges are either native (i.e., originally exist) or inserted, and a seed node $s$, the restricted maximum probability path $RMPP(s,u)$ is the directed path from $s$ to $u$ whose probability
$\rppr{(s,u)}$ is the maximum among all paths from $s$ to $u$ containing at most one inserted edge. Ties are broken arbitrarily.
\end{definition}

\begin{definition}[RMPP model~\cite{chaoji2012recommendations}]
In the RMPP model, the influence of a seed node $s$ to a node $u$ is the influence probability $\rppr{(s,u)}$ of $RMPP(s,u)$, and the total influence of $s$ is the sum of influence probabilities of all RMPPs from $s$ to the rest of nodes in the graph.
\end{definition}

\begin{figure}[!t]

\centering
\includegraphics[width=0.2\textwidth]{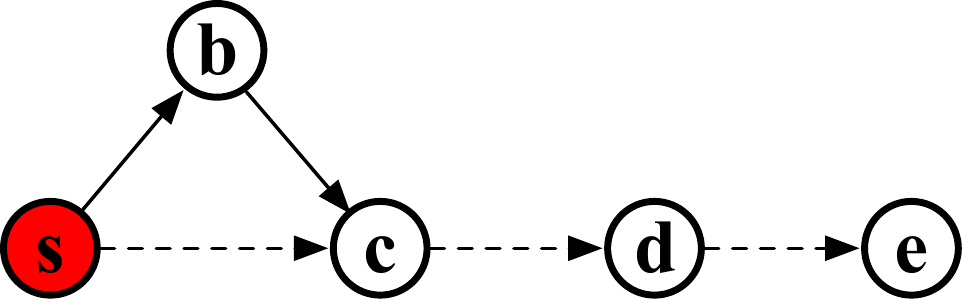}\hspace{-1em}
\caption{A simple network where dashed edges are candidates for insertion.} 
\label{fig:rmppexample}
\end{figure}

\begin{example}\label{rmpp_example}
As shown in Figure~\ref{fig:rmppexample}, we have a network which contains five nodes and two native edges $(s,b)$ and $(b,c)$ with the same influence probability of 0.5. Suppose the seed node is $s$, three edge candidates $(s,c)$,$(c,d)$ and $(d,e)$ have the same influence probability of 1,  $T_1=\emptyset$ and $T_2=\{(s,c),(c,d)\}$. 
The marginal gain of $(s,c)$ over $T_1$ is 1-0.25=0.75 since $RMPP(s,c)$ is edge $(s,c)$. However, the marginal gain of $(d,e)$ over $T_2$ is 0 since $s$ needs to traverse at least two inserted edges to reach $e$, which is not allowed in the RMPP model. 
\end{example}

\begin{algorithm}[!t]
\small
 \caption{\method{SubnetworkAugmentationSketch}}\label{alg:initialframework}
   \SetKwInOut{Input}{Input}
   \SetKwInOut{Output}{Output}
 \Input{The input network $G$, the seed user $s$ and an integer $k$.}
  \Output{The \ksubnetwork{} $G_k$.}
  
$G_k=(V_k.E_k) \leftarrow $ \ksubnetwork{} containing the seed $s$ only\;
 
 \While{ $\exists v \in V_k , |N_{out}^{G_k}(v) | < min(k,|N_{out}^{G}(v)| )$   }{
  $G_k=$ {\method{EdgeInsertionSketch}}$(G,G_k,s,k)$\;
 } 
 
  Return $G_k$\; 
\end{algorithm}

RMPP provides an efficient lower bound estimation of influence spread under the IC model and preserves the submodularity~\cite{chaoji2012recommendations}. 
Unfortunately, directly adopting the RMPP model to insert edges into the \ksubnetwork{} which initially contains only the seed node $s$ does not help to achieve great influence spread. This is because $s$ has to traverse two inserted edges to reach the nodes that are two hops away, but this is not permitted in the RMPP model, as illustrated in Example~\ref{rmpp_example}. 

To mitigate this issue without missing the opportunity of leveraging the submodularity property of the RMPP model, we propose a subnetwork augmentation sketch in Algorithm~\ref{alg:initialframework}. Specifically, this sketch iteratively calls a strategy called \method{EdgeInsertionSketch} in the pseudocode. This strategy tries to insert edges to increase the influence spread of the seed node $s$ under the RMPP model and uses these inserted edges to update the input \ksubnetwork{} $G_k$. After reaching a termination condition, these \emph{inserted} edges will be treated as \emph{native} edges of $G_k$ in the next iteration, which will help $s$ to influence the nodes that are many hops away. In what follows, we will introduce how the method \method{EdgeInsertionSketch} works and then propose a much more efficient yet practical edge insertion strategy (Section~\ref{sec:edgeinsertion}). Afterwards, we will further optimize this subnetwork augmentation sketch (Section~\ref{sec:augmentation}).

\subsection{Edge Insertion}\label{sec:edgeinsertion}
Given a candidate network $G_c$ that contains candidate edges to be inserted into the current \ksubnetwork{} $G_k$, the most straightforward way to maximize the influence of $s$ is to greedily insert an edge with the maximum marginal gain into $G_k$ without breaking the constraint of \ksubnetwork{}. This approach suffers from serious scalability issues since it needs to update the marginal gain of all candidate edges in each iteration and the marginal gain computation of each edge incurs $O((V_k+E_k)\log V_k)$ time complexity with a variant of Dijkstra algorithm as in~\cite{chaoji2012recommendations}. To address this issue, we first present an edge insertion sketch from the theoretical point of view, followed by a practical insertion method.


 \begin{algorithm}[!t]
 \small
  \caption{ \method{EdgeInsertionSketch}}\label{alg:networkaugmentation}
    \SetKwInOut{Input}{Input}
    \SetKwInOut{Output}{Output}
   \Input{The input network $G=(V,E)$, the current \ksubnetwork{} $G_k=(V_k,E_k)$, the seed user $s$ and an integer $k$.}
   \Output{The updated \ksubnetwork{} $G_k$.}
 
 $\mathcal{C} \leftarrow $candidate set initialized as $\emptyset$\;
 \ForEach{ node $u \in G_k$}{
   \If{$|N^{G_k}_{out}(u)|<min(k,|N^{G}_{out}(u)|)$}{
     \ForEach{ $(u,v) \in E$ }{
       \If{$(u,v) \notin E_k $ and $v \neq s$}{
         Add $(u,v)$ into $\mathcal{C}$
       }
     
     }
   }
 
 }
 
 $G_c=\{V_c,E_c\} \leftarrow $ construct candidate graph based on $\mathcal{C}$\;
 
 Compute the probabilities of SMPPs from $s$ to all nodes in $G_k$ \;
 
 \ForEach{node $v \in V_c$ with $|N^{G_c}_{in}(v)|>0$}{
   $u^*_v= \argmax_{u \in N^{G_c}_{in}(v)} \sppr{(s,u)} \cdot p_{(u,v)} $\;
 }
 
 \While{$G_c$ has nodes with incoming neighbors}{
        
        $v^*= \argmax_{v \in V_c, |N^{G_c}_{in}(v)|>0}  \delta_{G_k \cup \{(u^*_v,v)\} }^\triangle (s) -  \delta_{G_k}^\triangle (s)$\;

     Remove all incoming edges to $v^*$ from $G_c$\; 
     Insert $(u^*_{v^*},v^*)$ into $G_k$\;
     
     \If{$|N^{G_k}_{out}(u^*_{v^*})|=min(k,|N^{G}_{out}(u^*_{v^*})|)$}{
       \ForEach{$v \in N^{G_c}_{out}(u^*_{v^*})$}{
         \lIf{$u^*_v=u^*_{v^*}$}{
                 Update $u^*_v$
         }
         Remove $(u^*_{v^*},v)$ from $G_c$\;
       }
     }
 }
 
 Mark all inserted edges as native\;

 \Return $G_k$\;

 \end{algorithm}

 \subsubsection{Edge Insertion Sketch}
 
Later in Theorem~\ref{theorem:critical}, we prove that for all candidate edges that share the same destination node, we can carefully insert the \emph{critical} edge (which will be defined in Definition~\ref{defn:criticalneighborandedge}) such that inserting other edges after this critical edge will not increase the influence spread of $s$ under the RMPP model. With this observation, we only need to compare the marginal gain of critical edges with different destination nodes in each iteration. Hence, the cost of marginal gain computation can be greatly reduced. 
To facilitate the presentation of our method, we use $\delta^\triangle()$ as the influence under the RMPP model, and introduce a concept, namely \emph{Strict Maximum Probability Path (SMPP)}. 

\begin{definition}[Strict Maximum Probability Path (SMPP)]\label{def:smpp}
  Given a seed node $s$ and a \ksubnetwork{} $G_k$ where edges are either native or inserted, the strict maximum probability path $SMPP(s,u)$ is a directed path from $s$ to $u$ whose probability $\sppr{(s,u)}$ is the maximum among all the paths from $s$ to $u$ containing native edges only (i.e., a special case of RMPP). Ties are broken arbitrarily.
  \end{definition}
  
 \begin{example}
As shown in Figure~\ref{fig:rmppexample}, $SMPP(s,c)$ consists of two native edges, namely $(s,b)$ and $(b,c)$. $SMPP(s,c)$ remains unchanged even after we insert edge $(s,c)$, since the inserted edge is not native.
 \end{example}

The algorithm to be introduced is based on Theorem~\ref{theorem:critical} which is in turn established upon  Lemma~\ref{theorem:pair} and Lemma~\ref{theorem:marginalgain} below. 

\begin{lemma}\label{theorem:pair}
Given a \ksubnetwork{} $G_k$, a seed node $s$, and two candidate edges $(u,v)$ and $(u',v)$ which are in the candidate network $G_c$ and share the same endpoint $v$, and $\rppr{(s,x)}(u,v)$ which denotes the probability of $RMPP(s,x)$ after inserting the edge $(u,v)$ into $G_k$, if $ \sppr{(s,u)}\cdot p_{(u,v)} \geq \allowbreak \sppr{(s,u')} \cdot p_{(u',v)} ,$
then for any node $x$ which can be reached by $v$ via directed paths, 
$\rppr{(s,x)}(u,v)  \geq \rppr{(s,x)}(u',v) $.
\end{lemma}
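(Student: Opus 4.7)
My plan is to fix any target node $x$ reachable from $v$, take the RMPP $P'$ from $s$ to $x$ in $G_k \cup \{(u',v)\}$ that achieves $\rppr{(s,x)}(u',v)$, and exhibit a competing walk $P$ in $G_k \cup \{(u,v)\}$ whose probability dominates that of $P'$. Since every RMPP candidate may use at most one inserted edge, the natural case split is on whether $P'$ actually traverses the newly inserted edge $(u',v)$.

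If $P'$ avoids $(u',v)$, then $P'$ already lives in $G_k$ and still respects the one-inserted-edge budget in $G_k \cup \{(u,v)\}$. Hence $P'$ is itself a legal RMPP candidate in the latter graph and we immediately get $\rppr{(s,x)}(u,v) \geq \mathrm{prob}(P') = \rppr{(s,x)}(u',v)$. If instead $P'$ traverses $(u',v)$, the one-inserted-edge restriction forces every other edge of $P'$ to be native, so I can decompose $P' = Q_1' \cdot (u',v) \cdot Q_2'$ where $Q_1'$ is a native path from $s$ to $u'$ and $Q_2'$ is a native path from $v$ to $x$. I then build a competitor by swapping out the prefix: $P = \mathrm{SMPP}(s,u) \cdot (u,v) \cdot Q_2'$, which contains exactly one inserted edge (namely $(u,v)$). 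Using the maximality of SMPP to bound $\mathrm{prob}(Q_1') \leq \sppr{(s,u')}$, together with the hypothesis $\sppr{(s,u)} \cdot p_{(u,v)} \geq \sppr{(s,u')} \cdot p_{(u',v)}$, a direct factor-by-factor comparison yields $\mathrm{prob}(P) \geq \mathrm{prob}(P')$.

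The main subtle point is that the concatenated walk $P$ need not be simple: $\mathrm{SMPP}(s,u)$ and $Q_2'$ could share vertices with each other or with $v$, so Definition~\ref{defn:rmpp}'s requirement of a directed path might not be met verbatim. I will dispose of this by observing that every edge probability lies in $[0,1]$, so iteratively excising any cycle from $P$ only increases the cumulative probability while preserving the property of using at most one inserted edge. The resulting simple subpath is therefore a bona fide RMPP candidate in $G_k \cup \{(u,v)\}$, with probability at least $\mathrm{prob}(P) \geq \mathrm{prob}(P')$. Combining the two cases gives $\rppr{(s,x)}(u,v) \geq \rppr{(s,x)}(u',v)$, and since $x$ was arbitrary among descendants of $v$ the lemma follows.
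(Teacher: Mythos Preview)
Your proof is correct and follows essentially the same strategy as the paper: split on whether the optimal RMPP in $G_k \cup \{(u',v)\}$ uses the inserted edge, and in the non-trivial case decompose the path at $(u',v)$ and replace the prefix by $\mathrm{SMPP}(s,u)\cdot(u,v)$ to obtain a competitor in $G_k \cup \{(u,v)\}$. Your treatment is in fact slightly more careful than the paper's, which tacitly assumes the swapped walk is a simple path, whereas you explicitly dispose of potential cycles via the monotonicity of edge probabilities in $[0,1]$.
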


\begin{proof}
If the $RMPP(s,x)$ from $s$ to $x$ remains the same after inserting $(u,v)$ and $(u',v)$, the influence probability of $x$ will remain unchanged, i.e., $\rppr{(s,x)}(u,v) = \rppr{(s,x)}(u',v)$.

If the $RMPP(s,x)$ from $s$ to $x$ is changed after inserting $(u,v)$ or $(u',v)$, the $RMPP$ must contain $(u,v)$ or $(u',v)$. Let $RMPP_{(u,v)}$ and $RMPP_{(u',v)}$ denote the \emph{updated} $RMPP$ from $s$ to $x$ by inserting edge $(u,v)$ and $(u',v)$ respectively. Each of these two $RMPP$s consist of two sub-paths. That is, $P_1(RMPP_{(u,v)})=[s,...,u,v]$,  $P_2(RMPP_{(u,v)})=[v,...,x]$, $P_1(RMPP_{(u',v)})=[s,...,u',v]$ and $P_2(RMPP_{(u',v)})=[v,...,x]$. Based on Definition~\ref{defn:rmpp}, the second paths of both 
$RMPP$s do not contain inserted edges. Thus, $P_2(RMPP_{(u,v)})$ and $P_2(RMPP_{(u',v)})$ have the same probability. 
Since $\sppr{(s,u)}\cdot p_{(u,v)}$ and $\sppr{(s,u')}\cdot p_{(u',v)} $ refer to the probability of the first sub-paths of these two RMPPs respectively and the former one is not smaller than the latter, we have $\rppr{(s,x)}(u,v)  \geq \rppr{(s,x)}(u',v)$.
\end{proof}

\begin{lemma}\label{theorem:marginalgain}
Given a \ksubnetwork{} $G_k$, a seed node $s$, and two candidate edges $(u,v)$ and $(u',v)$ which are in the candidate network $G_c$ and share the same endpoint $v$, if $\sppr{(s,u)}p_{(u,v)} \geq \allowbreak \sppr{(s,u')} p_{(u',v)} $, $\delta_{G_k \cup \{(u,v)\}}^\triangle(s)=\delta_{G_k \cup \{(u,v),(u',v) \}}^\triangle(s)$.
\end{lemma}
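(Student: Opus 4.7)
The plan is to show a two-sided inequality: monotonicity gives $\delta^\triangle_{G_k \cup \{(u,v)\}}(s) \leq \delta^\triangle_{G_k \cup \{(u,v),(u',v)\}}(s)$ for free (adding an edge cannot decrease RMPP influence, since every path that existed remains a valid path with at most one inserted edge). So the real work is the reverse direction, and I would argue it pointwise: for every node $x$, the RMPP probability from $s$ to $x$ computed in $G_k \cup \{(u,v),(u',v)\}$ is no greater than in $G_k \cup \{(u,v)\}$.

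To do this I would fix an arbitrary $x$ and consider the path $P^*$ achieving the RMPP from $s$ to $x$ in $G_k \cup \{(u,v),(u',v)\}$. Since $P^*$ contains at most one inserted edge, there are three cases. If $P^*$ uses neither $(u,v)$ nor $(u',v)$, then $P^*$ is already a valid path in $G_k \cup \{(u,v)\}$ with the same probability. If $P^*$ uses $(u,v)$, the same holds. The only nontrivial case is when $P^*$ uses $(u',v)$; here $P^*$ decomposes as $[s,\ldots,u'] \cdot (u',v) \cdot [v,\ldots,x]$, where both subpaths consist solely of native edges of $G_k$ (because $(u',v)$ is the unique inserted edge on $P^*$).

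To bound this case, I would apply Lemma~\ref{theorem:pair} directly: since $\sppr{(s,u)} p_{(u,v)} \geq \sppr{(s,u')} p_{(u',v)}$, for every node $x$ reachable from $v$ via native edges we have $\rppr{(s,x)}(u,v) \geq \rppr{(s,x)}(u',v)$, where $\rppr{(s,x)}(\cdot)$ denotes the RMPP probability after inserting the corresponding edge into $G_k$. Because the native-edge subgraph is the same in $G_k \cup \{(u,v)\}$ and in $G_k \cup \{(u,v),(u',v)\}$, the SMPP values used in Lemma~\ref{theorem:pair} are unchanged, and the bound transfers: there is a path in $G_k \cup \{(u,v)\}$ from $s$ to $x$ (namely $SMPP(s,u)$ concatenated with $(u,v)$ and the native suffix $[v,\ldots,x]$ from $P^*$) whose probability is at least that of $P^*$. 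Summing over $x$ yields the claimed equality.

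I expect the main obstacle to be bookkeeping rather than any deep argument: one must verify that the SMPP values (Definition~\ref{def:smpp}) and the native-edge structure really are invariant between the two augmented networks, so that the path produced by Lemma~\ref{theorem:pair}'s comparison is actually realizable in $G_k \cup \{(u,v)\}$ and still carries at most one inserted edge. Once this invariance is spelled out, the three-case analysis closes immediately.
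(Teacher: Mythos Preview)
Your proposal is correct and follows essentially the same approach as the paper: both arguments reduce to Lemma~\ref{theorem:pair} by observing that any RMPP in $G_k \cup \{(u,v),(u',v)\}$ uses at most one inserted edge, so either it already lives in $G_k \cup \{(u,v)\}$ or it uses $(u',v)$ and can be dominated by a path through $(u,v)$. The paper's proof compresses this into two sentences and leaves the three-case split and the invariance of the native-edge structure implicit, whereas you spell them out explicitly; the underlying idea is identical.
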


\begin{proof}
Based on Lemma~\ref{theorem:pair}, for any node $x$ that can be reached by $v$, the influence probability $\rppr{(s,x)}$ of $RMPP(s,x)$ under $G _k\cup \{(u,v)\}$ is never smaller than that under $G_k \cup \{(u',v)\}$. 
Thus, introducing $\{(u',v)\}$ after inserting $(u,v)$ does not contribute to the influence increment of $s$. Thus, $\delta_{G_k \cup \{(u,v)\}}^\triangle(s)=\delta_{G_k \cup \{(u,v),(u',v) \}}^\triangle(s)$.
\end{proof}

To better describe Theorem~\ref{theorem:critical} later, we have the definition below.

\begin{definition}[critical neighbor and edge]\label{defn:criticalneighborandedge}
Given a \ksubnetwork{} $G_k$, the seed node $s$, the candidate network $G_c$ and a node $v$, if $|N^{G_c}_{in}(v)|>0$, the critical neighbor $u^*_v$ of $v$ in $G_c$ is the neighbor whose SMPP path probability $\sppr{(s,u^*_v)}$ times the influence probability $p_{(u^*_v,v)}$ of the edge $(u^*_v,v)$ is the greatest among all the incoming neighbors of $v$ in $G_c$. That is, $u^*_v= \argmax_{ u \in N^{G_c}_{in}(v)} \sppr{(s,u)}\cdot p_{(u,v)}$. Correspondingly, the incoming edge $(u^*_v,v)$ is the critical edge of $v$.

\end{definition}

\begin{theorem}\label{theorem:critical}
Given a \ksubnetwork{} $G_k$, the seed node $s$, a node $v$, a candidate set $E'=\{ (u,v)| u \in N^{G_c}_{in}(v)   \}$ for insertion and the critical edge $(u^*_v,v)$, we have: $\delta^\triangle_{G_k \cup E'}(s)=\delta^\triangle_{G_k \cup \{(u^*_v,v)\}}(s)$.
\end{theorem}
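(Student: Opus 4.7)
The plan is to generalize Lemma~\ref{theorem:marginalgain} from a single pair of candidate edges to the entire candidate set $E'$, exploiting the fact that every edge in $E'$ shares the same destination $v$. The core observation is that under the RMPP model any path used by $RMPP(s,x)$ contains at most one inserted edge, so any RMPP under $G_k \cup E'$ either uses no inserted edge at all, or decomposes uniquely as $s \rightsquigarrow u \rightarrow v \rightsquigarrow x$ for some $(u,v) \in E'$, with both the prefix $s \rightsquigarrow u$ and the suffix $v \rightsquigarrow x$ lying entirely in the native graph $G_k$.

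Concretely, I would prove the stronger per-node claim that $\rppr{(s,x)}$ under $G_k \cup E'$ equals $\rppr{(s,x)}$ under $G_k \cup \{(u^*_v,v)\}$ for every $x \in V$, and then sum over $x$ to obtain $\delta^\triangle_{G_k \cup E'}(s) = \delta^\triangle_{G_k \cup \{(u^*_v,v)\}}(s)$. The $\geq$ direction is immediate since $\{(u^*_v,v)\} \subseteq E'$. For the $\leq$ direction, take any RMPP from $s$ to $x$ under $G_k \cup E'$. If it uses no inserted edge it is equally available in $G_k \cup \{(u^*_v,v)\}$. Otherwise it uses exactly one inserted edge $(u,v) \in E'$, and by the decomposition above its probability is bounded by $\sppr{(s,u)}\,p_{(u,v)}\,\sppr{(v,x)}$. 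The definition of the critical neighbor (Definition~\ref{defn:criticalneighborandedge}) guarantees $\sppr{(s,u)}\,p_{(u,v)} \leq \sppr{(s,u^*_v)}\,p_{(u^*_v,v)}$, so this probability is matched by the RMPP in $G_k \cup \{(u^*_v,v)\}$ obtained by concatenating $SMPP(s,u^*_v)$, the critical edge, and $SMPP(v,x)$. An alternative route is induction on $|E' \setminus \{(u^*_v,v)\}|$ by peeling off edges via repeated applications of Lemma~\ref{theorem:marginalgain}, but this requires mild care since after one peel step several inserted edges with destination $v$ coexist and the lemma's hypothesis is phrased relative to a network whose edges are all native; the direct per-node argument above sidesteps this subtlety.

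The main obstacle will be making sure no corner case is missed in the RMPP decomposition. In particular: nodes $x$ not reachable from $v$ in $G_k$, for which no inserted edge of $E'$ can ever appear on a path to $x$ so $\rppr{(s,x)}$ is unchanged by the insertion; and the degenerate case $x = v$, which fits the decomposition with the trivial zero-length suffix of probability one. Both cases fold cleanly into the same argument once the decomposition is stated carefully, and Lemma~\ref{theorem:pair} can be cited to package the probability comparison at the heart of the argument.
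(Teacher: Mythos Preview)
Your proposal is correct and follows essentially the same idea as the paper: both arguments rest on the observation that any RMPP through an edge of $E'$ factors as a native prefix, one inserted edge into $v$, and a native suffix, so the critical edge dominates every alternative. The paper's proof is a two-line invocation of Lemma~\ref{theorem:marginalgain} (implicitly iterating it over the non-critical edges), whereas you carry out the per-node RMPP comparison directly and explicitly flag the bookkeeping subtlety that the paper glosses over; your route is the more careful presentation of the same underlying argument.
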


\begin{proof}
Since $(u^*_v,v)$ is the critical edge, $\sppr{(s,u^*_v)}\cdot p_{(u^*_v,v)}$ is the greatest among all the incoming neighbors of $v$ in $G_c$. Based on Lemma~\ref{theorem:marginalgain}, we can know that, after inserting $(u^*_v,v)$, introducing edges sharing the same destination node as $(u^*_v,v)$ will not increase the RMPP-based influence. Thus, this theorem is deduced.
\end{proof}


Algorithm~\ref{alg:networkaugmentation} describes the process of edge insertion sketch by leveraging Theorem~\ref{theorem:critical}. Lines 1-7 construct the candidate graph $G_c$ containing the candidate edges which can be inserted into $G_k$. $G_c$ only contains the edges between nodes in $G_k$ and their neighbors in $G$, since nodes in $G_k$ cannot reach their two-hop neighbors in $G$ with only one inserted edge. Lines 8-10 compute the critical incoming neighbor for each node in $G_c$, and Lines 11-18 iteratively insert the critical edge with the maximum marginal gain into $G_k$ until all the critical edges have been inserted.

\smallskip
\noindent \textbf{Theoretical Guarantee}. Maximizing the influence spread of $s$ in the given $G_k$ under the RMPP model is already a submodular set function maximization problem with partition matroids constraint. That is, we are only allowed to pick a single edge from edges with the same destination/end node. Thus, the greedy approach (i.e., Algorithm~\ref{alg:networkaugmentation}) naturally leads to a 1/2 approximation ratio~\cite{goundan2007revisiting}. 

\smallskip
\noindent \textbf{Time Complexity}.
There are at most $O(|V|)$ iterations and each iteration takes $O(|V| (|V|+|E| )\log |V| )$ to find the critical edge with the maximum marginal gain via a variant of Dijkstra algorithm~\cite{chaoji2012recommendations}. Thus, Algorithm~\ref{alg:networkaugmentation} takes $O(|V|^2  (|V|+|E| )\log |V| )$ time. 



\begin{algorithm}[!t] 
\small
 \caption{ \method{PracticalEdgeInsertion (\innermethod{})} }\label{alg:networkaugmentationpractical}
   \SetKwInOut{Input}{Input}
   \SetKwInOut{Output}{Output}
  \Input{The input network $G=(V,E)$, the current \ksubnetwork{} $G_k=(V_k,E_k)$, the seed user $s$ and an integer $k$.}
  \Output{The updated \ksubnetwork{} $G_k$.}


$G_c=\{V_c,E_c\} \leftarrow $ constructs the candidate graph as in Algorithm~\ref{alg:networkaugmentation}\;

 $L \leftarrow$ nodes in $G_k$ excluding $s$, ranked by their subtree sizes of in the SMPP-based tree in non-increasing order\;
 $L' \leftarrow $ nodes, that are in $G_c$ but not in $G_k$, ranked by their out-degrees in non-increasing order\;
 $L \leftarrow $ Append $L'$ to the end of $L$\;

\ForEach{$v$ in $L$}{
  \lIf{$|N^{G_c}_{in}(v)|=0$}{Continue}

  $u^*_v= \argmax_{u \in N^{G_c}_{in}(v)}  \sppr{(s,u)} \cdot p_{(u,v)} $\;

  Add $(u^*_v,v)$ into $G_k$, and remove $(u^*_v,v)$ from $G_c$\;
    \If{$|N^{G_k}_{out}(u^*_{v})|=min(k,|N^{G}_{out}(u^*_{v})|)$}{
    Delete all outgoing edges of $u^*_v$ from $G_c$.
  
  }
  
  Mark all inserted edges as native\;

}
\Return $G_k$\;

\end{algorithm}

\subsubsection{Practical Edge Insertion}\label{sec:praticaledgeinsertion}

Algorithm~\ref{alg:networkaugmentation} suffers from very high time complexity and thus is infeasible in real-world large-scale social networks. In each iteration, Algorithm~\ref{alg:networkaugmentation} tries to insert a critical incoming edge to a node with the maximum marginal gain w.r.t. the influence, and correspondingly update the candidate network which can result in updates of critical edges of the remaining nodes and the marginal gain of each critical edge (Lines 11-18). We observe that the nodes considered across all iterations actually form an order. If the order is known in advance, we can directly follow this order to insert critical edges \emph{without} marginal gain computation and comparison, and hence the whole process can be notably accelerated. Unfortunately, computing this order is the main efficiency bottleneck of Algorithm~\ref{alg:networkaugmentation}. 

We find that in the special case below (i.e., Theorem~\ref{theorem:idealcase}), there is no need to know this order because, no matter what the order is, the critical edge of any remaining node will not be updated in the iterative insertion process.

\begin{theorem}\label{theorem:idealcase}
Given the original network $G$, a \ksubnetwork{} $G_k$, the initial candidate network $G_c=\{V_c,E_c\}$ (i.e., Line 7 in Algorithm~\ref{alg:networkaugmentation}), the seed node $s$, and $Cr(x)=\{ v |  \forall v \in V_c, x= u^*_v\}  $ which denotes the set of nodes whose critical neighbor is $x$. If $\forall x \in V_c, |Cr(x)| \leq min(k,N^{G}_{out}(x))- N^{G_k}_{out}(x) $, then directly inserting all critical edges leads to the optimal solution with the maximum influence. 
\end{theorem}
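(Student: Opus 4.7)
The plan is to decompose the claim into two pieces: (1) the full collection of critical edges $\hat{E} = \{(u^*_v, v) : v \in V_c, |N^{G_c}_{in}(v)| > 0\}$ is a feasible insertion, and (2) $\hat{E}$ dominates every other feasible insertion in RMPP-based influence. Before attacking either, I would first observe that within a single invocation of \method{EdgeInsertionSketch}, Line~19 marks inserted edges as native only \emph{after} all insertions are complete, so the SMPP probabilities $\sppr{(s,u)}$ of Definition~\ref{def:smpp} are frozen throughout the iteration. Consequently, the critical neighbor $u^*_v$ and the critical edge $(u^*_v,v)$ of every $v \in V_c$ are determined by the initial $G_k$ and do not change while the insertions proceed.

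For part~(1), feasibility follows directly from the hypothesis: each $x \in V_c$ originates exactly $|Cr(x)|$ critical edges, and by assumption $|Cr(x)| \leq \min(k,|N^G_{out}(x)|) - |N^{G_k}_{out}(x)|$, so adding all critical edges whose source is $x$ keeps its out-degree within the \ksubnetwork{} limit. Since this holds independently for every $x$, $\hat{E}$ can be inserted in any order without violating the constraint.

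For part~(2), let $E^\star$ be any feasible insertion that maximizes $\delta^\triangle_{G_k \cup E^\star}(s)$. Using Theorem~\ref{theorem:critical} together with Lemma~\ref{theorem:marginalgain}, I may assume without loss of generality that $E^\star$ contains at most one edge per destination. For each destination $v$ hit by some $(u',v) \in E^\star$, Lemma~\ref{theorem:pair} gives $\rppr{(s,x)}(u^*_v,v) \geq \rppr{(s,x)}(u',v)$ for every $x$ reachable from $v$, so swapping $(u',v)$ for $(u^*_v,v)$ weakly increases $\delta^\triangle$; adding the critical edges of destinations untouched by $E^\star$ can only further increase influence by monotonicity of the RMPP-based objective. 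By part~(1) this transformed insertion is itself feasible and equals $\hat{E}$, which yields $\delta^\triangle_{G_k \cup \hat{E}}(s) \geq \delta^\triangle_{G_k \cup E^\star}(s)$ and hence the optimality claim.

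The main obstacle is making the swap argument in part~(2) watertight when many destinations are updated simultaneously. Lemma~\ref{theorem:pair} compares two candidates into a single destination in isolation, whereas the transformation alters many destinations at once and migrates out-degree usage from each $u'$ to the corresponding $u^*_v$ at each swap. I would handle both issues by performing the swaps sequentially: after each individual swap the only RMPPs that change are those routed through the newly inserted critical edge, so Lemma~\ref{theorem:pair} applies verbatim at that step; the budgets remain legal because the hypothesis on $|Cr(x)|$ is precisely the condition that accommodates every critical edge out of a fixed source at once. A secondary care is to keep the native/inserted distinction crisp when invoking SMPP quantities, which is automatic here because SMPP is defined over native edges only and marking to native is deferred to the end of the iteration.
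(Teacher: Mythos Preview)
Your proposal is correct and follows essentially the same approach as the paper: establish feasibility of inserting all critical edges from the hypothesis on $|Cr(x)|$, then argue optimality by noting (via Theorem~\ref{theorem:critical} and its supporting lemmas) that replacing any feasible insertion by the corresponding critical edges can only increase the RMPP-based influence. Your write-up is considerably more careful than the paper's terse proof---in particular, your explicit sequential-swap argument and the observation that SMPP probabilities are frozen during the iteration fill in details the paper leaves implicit.
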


\begin{proof}
Since $min(k,N^{G}_{out}(x))$ describes the maximum number of outgoing edges from $x$ allowed in $G_k$,  $ N^{G_k}_{out}(x) $ describes how many of them have already been inserted into $G_k$, and $\forall x \in V_c, |Cr(x)| \\ \leq min(k,N^{G}_{out}(x)) \allowbreak - N^{G_k}_{out}(x) $, inserting all critical edges in $Cr(x)$ will not result in the critical edge/neighbor update of the rest nodes (i.e., Line 17 in Algorithm~\ref{alg:networkaugmentation}). Based on Theorem~\ref{theorem:critical}, directly inserting critical edges of all nodes leads to the maximum influence spread since replacing any critical edges with the remaining edge candidates or inserting more edges will not increase influence.  
\end{proof}

In the special case above, directly inserting all critical edges produces the optimal solution as Algorithm~\ref{alg:networkaugmentation} which does not need to update the critical edges (Line 17) in each iteration. Although this special case is rather rare in practice, it inspires us to find an order which can be easily obtained in prior and follow this order to insert an incoming edge $(u,v)$ for each node $v$ whose $\sppr{(s,u)} \cdot p_{(u,v)}$ is as large as possible. Ideally, this strategy is expected to effectively approximate Algorithm~\ref{alg:networkaugmentation}. We make the following observation to guide us to find such an order.


\smallskip
\noindent \textbf{Observation 1}. If the insertion of an edge $(u,v)$ introduces an $RMPP(s,v)$ with an influence probability greater than that of the $SMPP(s,v)$, for any node $x$ which $s$ needs to reach via $v$, the insertion of $(u,v)$ could also introduce $RMPP(s,x)$ with a greater influence probability than $SMPP(s,x)$.


Based on this observation, we should prioritize inserting incoming edges for nodes which \emph{appear} in a large number of SMPPs from $s$ to different nodes since such insertions could potentially increase the influence probabilities of the $RMPPs$ from $s$ to a large number of nodes. Based on Definition~\ref{def:smpp}, the graph formed by all SMPPs is a tree rooted at $s$, namely \emph{SMPP-based tree}, because any two nodes are connected by exactly one path. Thus, we can compute the appearance frequency by estimating the subtree size rooted at each node in the SMPP-based tree, which can be realized by a depth-first search and dynamic programming in $O(|E_k|)$ time. To help the influence spread reach a large number of users, we should also consider nodes in $G_c$ which have not been included in the \ksubnetwork{}. For these nodes, their subtree sizes are set as 0 and ranked based on their out-degrees in the input network $G$ since nodes with greater out-degrees have larger potentials to expand the \ksubnetwork{}. Algorithm~\ref{alg:networkaugmentationpractical} describes the overall process of practical edge insertion.

\begin{figure}[!t]

\centering
\includegraphics[width=0.4\textwidth]{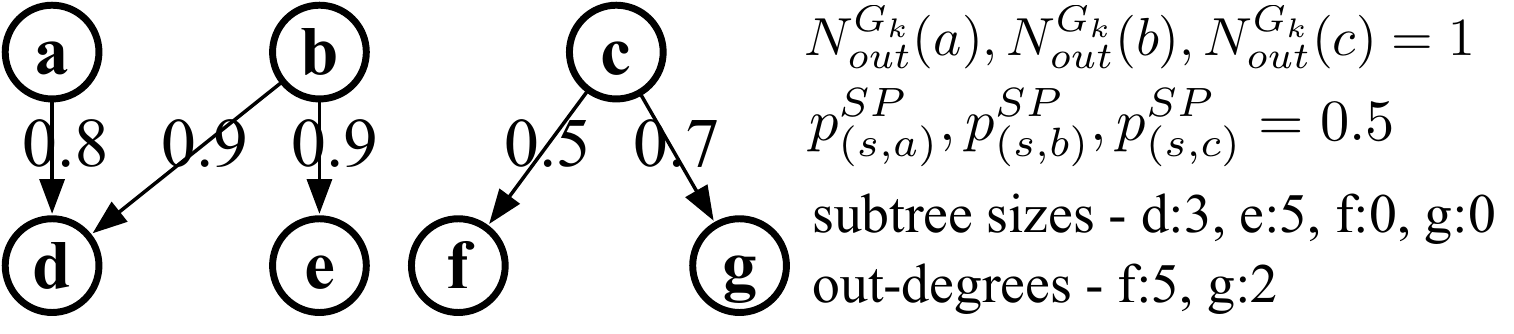}\hspace{-1em}
\caption{An example of a candidate network $G_c$ with $k=2$.} 
\label{fig:EIexample}
\end{figure}

\begin{example}\label{example:EI}
Figure~\ref{fig:EIexample} shows a candidate network $G_c$ with $k=2$. The information needed by Algorithm~\ref{alg:networkaugmentationpractical} is listed at the right side. Nodes are ordered by their subtree sizes and then out-degrees. Thus, the node ranking we follow to insert incoming edges is $e,d,f$ and $g$. When we consider node $e$, edge $(b,e)$ will be inserted into $G_k$ and edge $(b,d)$ will be deleted from $G_c$ since $N^{G_k}_{out}(b)$ will become $2=k$. When we consider $d$, edge $(a,d)$ will be inserted. Similarly, edge $(c,f)$ will be inserted but edge $(c,g)$ will be deleted when we consider node $f$ and thus there will not be incoming edge candidates for $g$.
\end{example}

\smallskip
\noindent \textbf{Time Complexity}. In Algorithm~\ref{alg:networkaugmentationpractical}, Line 1 constructs the candidate graph in $O(|V|+|E|)$ time, Lines 2-4 rank nodes based on their subtree sizes in the SMPP-based tree or degrees in $ O((|V|+|E|)\log |V|) $ time, and Lines 5-10 iteratively insert incoming edges for ranked nodes in $O(|V|+|E|)$ time since the probabilities of all SMPPs from $s$ have already been computed when we construct the SMPP-based tree in Line 2. Thus, the total time complexity is $O((|V|+|E|)\log |V|) $.


\begin{algorithm}[!t]
 \algsetup{linenosize=\footnotesize}
\small
 \caption{\method{PracticalSubnetworkAugmentation (\outtermethod{})}}\label{alg:practicalframework}
   \SetKwInOut{Input}{Input}
   \SetKwInOut{Output}{Output}
 \Input{The input network $G$, the seed user $s$, an integer $k$ and an error ratio $\epsilon$.}
  \Output{The \ksubnetwork{} $G_k$.}
  
$G_k=(V_k.E_k) \leftarrow $ \ksubnetwork{} containing the seed $s$ only\;
$G_c=(V_c,E_c) \leftarrow $ all outgoing edges of $s$ in $G$\;
$L' \leftarrow$ all outgoing neighbors of $s$ in $G$\;
 ${\delta^\triangle_{pre}}=1$\;

 \If(\tcp*[h]{\fontfamily{lmr}\selectfont \small expansion stage}){$G_c$ has edges }{ 
 
  {$G_k$= \innermethod{}$(G,G_k,G_c,s,k,L')$} 
 
  \lIf{$ (\delta^\triangle_{G_k}(s)-\delta^\triangle_{pre})/\delta^\triangle_{pre}\leq \epsilon   $}{
    Break
  }
  \lElse{$\delta^\triangle_{pre}=\delta^\triangle_{G_k}(s)$}
  
  {$G_c, L'=$ \method{UpdateCandidateGraph}$(G,G_k,G_c,s,L')$}\;

 }
 
$G_k$={ \method{FUR}}$(G,G_k,G_c,s)$\tcp*[h]{\fontfamily{cmr}\selectfont \small filling stage}

  Return $G_k$\; 
\end{algorithm}

\begin{algorithm}[!t]
 \algsetup{linenosize=\footnotesize}
\small
 \caption{\method{FillUpRecommendation (FUR)}}\label{alg:fillup}
   \SetKwInOut{Input}{Input}
   \SetKwInOut{Output}{Output}
 \Input{The input network $G$, the \ksubnetwork{} $G_k$, the candidate graph $G_c$ and the seed user $s$.}
  \Output{The updated $G_k$.}
  
  \ForEach{$v \in V_k$ where $|N_{in}^{G_k}(v) |>0$}{
     ${L^{in}[v] }\leftarrow$ a list of incoming neighbors of $v$ in $G_c$ where each neighbor $u$ is sorted by $\sppr{(s,u)} \cdot p_{(u,v)}$ in descending order\; 
    ${I[v]}=1$\;
    
  }
  
  $L \leftarrow$ a list of nodes in $V_k$ ordered by their subtree sizes in the graph formed by SMPPs from $s$ in descending order\;
  
  \While{$L$ is not empty}{
      $L' \leftarrow $ an empty list\;
      
      \ForEach{$v$ in $L$}{
      
        \ForEach{$i$ from $I[v]$ to $|L^{in}[v]|$ }{
          $u=L^{in}[v][i]$\;
          \lIf{$(u,v) \notin E_c$}{ Continue}
          \lIf{ $i \neq |L^{in}[v]|$ }{ Add $v$ into $L'$ }
          $I[v]=i+1$\; 
          Add $(u,v)$ into $G_k$; Remove $(u,v)$ from $G_c$\;
            \If{$|N^{G_k}_{out}(u)|=min(k,|N^{G}_{out}(u)|)$}{
              Delete all outgoing edges of $u$ from $G_c$.
            
            }
           Break\;
        
        }
      
      }
      
      $L=L'$\;
  
  }

  Return $G_k$\; 
\end{algorithm}

\subsection{Subnetwork Augmentation}\label{sec:augmentation}

The subnetwork augmentation sketch (i.e., Algorithm~\ref{alg:initialframework}) suffers from two issues which incur notable computation costs. 
\begin{itemize}[noitemsep,leftmargin=*]
\item Issue 1: the candidate graph in each iteration is constructed from scratch without leveraging the candidate graphs generated in previous iterations. 
\item Issue 2: the number of edge insertions in each iteration is very limited (i.e., at most one incoming edge for each node) such that it may take considerable iterations to fill up the recommendations. 
\end{itemize}

\smallskip
\noindent \textbf{Observation 2}. Regarding Issue 1, we observe that the remaining candidate graph updated by \innermethod{} (i.e., Algorithm~\ref{alg:networkaugmentationpractical}) in the last iteration is a subgraph of the input candidate graph of \innermethod{} in the current iteration. Specifically, the edges newly introduced in the current input candidate graph are outgoing edges which 1) start from nodes newly introduced in $G_k$, and 2) neither appear in $G_k$ nor the remaining $G_c$ updated by \innermethod{} in the last iteration. Based on this observation, we can simply build the candidate graph based on the previous one by only introducing these edges. Due to its simplicity and space limit, please refer to our technical
report~\cite{TR} for the pseudocode. With this update method, the process of constructing the candidate graph from scratch in \innermethod{} can be replaced. 



\smallskip
\noindent \textbf{Observation 3}. Regarding Issue 2, we observe that the influence spread of $s$ will converge after a few iterations (e.g., 6) of the subnetwork augmentation and the \ksubnetwork{} $G_k$ barely includes more nodes from $G$ after convergence (as shown in Figure~\ref{fig:noderatio} in experiments). Considering that we treat all inserted edges as native ones at the end of each iteration, the influence spread of $s$ is actually based on SMPP paths. Therefore, the converged influence also indicates that the SMPPs from $s$ to the rest of nodes barely change, and accordingly the ranking of the nodes based on their subtree sizes in the SMPP-based tree is quite stable.

 
As a result, after the convergence, we can leverage the same and converged node ranking for edge insertions in the rest of iterations of the subnetwork augmentation. Furthermore, the candidate graph at the end of the last iteration naturally becomes the initial candidate graph in the current iteration since no more new nodes are considered. With these properties, we propose a method called \method{FillUpRecommendation (FUR)} (Algorithm~\ref{alg:fillup}) which simulates the process of \innermethod{} in the rest of all iterations of the subnetwork augmentation. Specifically, each while loop from Line 6 to 17 corresponds to one iteration where we greedily insert the critical edge for each node in $L$ constructed based on the converged ranking. In each loop, $L'$ is used to store nodes which still have incoming neighbors and thus will be considered in the next iteration, and $I[v]$ is used to efficiently retrieve the critical neighbor by pruning out neighbors which either were considered in previous iterations or cannot be connected to $v$ anymore due to the update of $G_c$ in previous iterations (Lines 14-15). Since each incoming neighbor has been sorted in Line 2, simulating each iteration is very efficient with aforementioned data structures.

\begin{figure}[!t]

\centering
\includegraphics[width=0.4\textwidth]{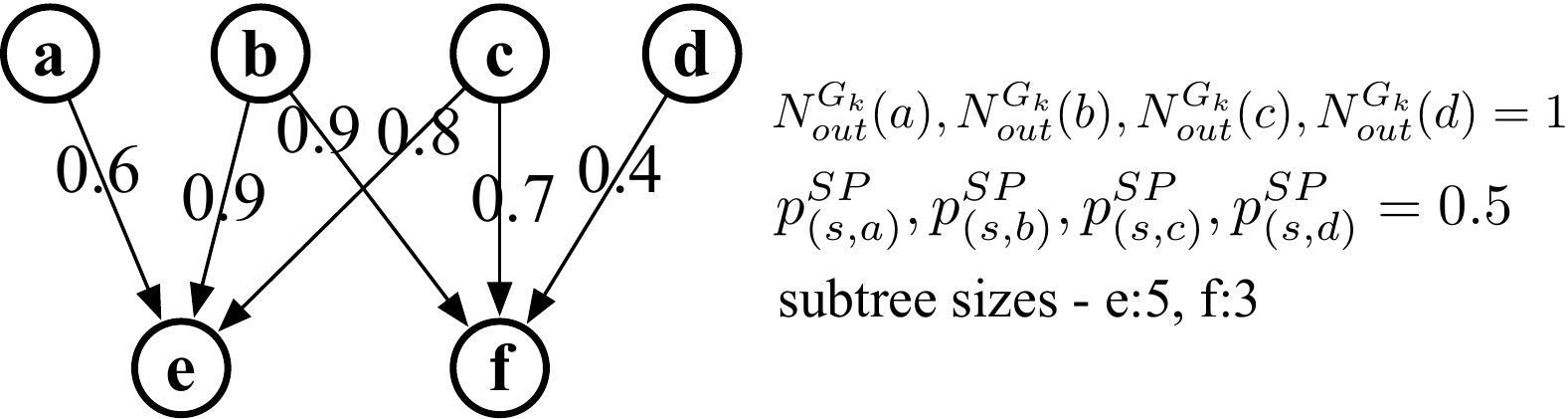}\hspace{-1em}
\caption{An example of a candidate network $G_c$ with $k=2$.} 
\label{fig:FURexample}
\end{figure}



\begin{example}\label{example:FUR}
Figure~\ref{fig:FURexample} shows a candidate network $G_c$ with $k=2$. Suppose that the influence spread in $G_k$has converged at iteration $i$ and the converged node ranking is $e$ and $f$. Algorithm~\ref{alg:networkaugmentationpractical} will be executed two more iterations to reach the termination condition (i.e., no edges exist in $G_c$). At iteration $i+1$, edges $(b,e)$, $(b,f)$, $(c,f)$ and $(c,e)$ will be inserted, deleted, inserted and deleted in sequence respectively. The remaining candidate graph becomes the initial candidate graph at iteration $i+2$ where edges $(a,e)$ and $(d,f)$ will be inserted. The method \method{FUR} (Algorithm~\ref{alg:fillup}) is proposed to simulate the iterative process after the convergence and the while loop in \method{FUR} will be executed twice.
\end{example}

\smallskip

\noindent \textbf{Practical Subnetwork Augmentation}. By incorporating the measures for tackling the two issues above, we propose a framework called \method{PracticalSubnetworkAugmentation (\outtermethod{})} (Algorithm~\ref{alg:practicalframework}) where the predefined $\epsilon$ (e.g., $10^{-4}$) controls the convergence point (Line 7) and \method{FUR}, based on the converged subtree-size-based node ranking (Line 10), repeats the process of Lines 5-6 in \outtermethod{} until no edges exist in $G_c$. Note that $L'$ records the new candidate nodes to be incorporated into $G_k$ for graph expansion in the last iteration and is updated by \method{UpdateCandidateGraph} in Line 9. For ease of illustration in our experiments later, we regard the process before convergence as the \emph{expansion stage} (i.e., Lines 5-9) and the process afterwards as the \emph{filling stage} (i.e., Line 10).



\smallskip
\noindent \textbf{Time Complexity}. Suppose Algorithm~\ref{alg:practicalframework} takes $I$ iterations to converge. Lines 5-9 take $O(I(|V|+|E|)\log |V| )$ time which is acceptable in practice since $I$ is empirically small (e.g., 6). Lines 1-3 in Algorithm~\ref{alg:fillup} take $O(|V|d_{m}\log d_{m} )$ where $d_m$ refers to the maximum in-degree in $G$, and Lines 5-17 take $O(|E|)$ time as each incoming neighbor of each node is visited only once. Thus, the total time complexity of Algorithm~\ref{alg:practicalframework} is $O(I(|V|+|E|)\log |V|+|V|d_{m}\log d_{m} )$.


\section{Solving \short{} for multiple seeds}\label{sec:methodmultiple}
When there is a set $S$ of independent seed nodes, directly adopting the previous idea of inserting at most one incoming edge for each node is not feasible, since the SMPPs from seed nodes to the same node can be different and hence the critical edge for each node is seed-specific. The most straightfoward approach to handle the case is to iteratively insert the edge, which brings the maximum marginal gain to the sum of the influence increment of all seed nodes, until no edges can be inserted into the \ksubnetwork{}. However, this approach is impractical due to expensive marginal gain computation and a huge number of iterations needed to converge.  

\smallskip
\noindent \textbf{Observation 4}. We observe that an edge $(u,v)$ insertion can have different levels of impact on increasing the influence spread of different seed nodes. For example, if $u$ is a neighbor of the seed node $s_1$ but ten hops away from the seed node $s_2$, inserting $(u,v)$ is more likely to bring more influence to $s_1$ than $s_2$. Furthermore, as the iteration goes, the \ksubnetwork{} will be expanded with more topological information during the augmentation process. Thus, we should not `waste' many unnecessary recommendation opportunities in the current stage of the \ksubnetwork{} since they can be saved for better decision making in later iterations. 

Thus, to make the best use of candidate recommendations, we should  focus only on the recommendation which will bring large influence increment to the relevant seed node. Specifically, we define $r(u)=\argmax_{s\in S} \sppr{(s,u)}$ as the \emph{relevant seed node} of $u$ since inserting outgoing edges from $u$ would be more likely to increase the influence of the seed node $r(u)$. For all incoming neighbors of node $v$, we should connect the critical neighbor $u^*_v$ to $v$ which forms the RMPP with the maximum probability from the corresponding seed node $r(u^*_v)$, i.e., $u^*_v=\argmax_{ u\in N_{in}^{G_c}(v) }   \sppr{(r(u),u)} \cdot p_{(u,v)} $. 

Nodes with the same relevant seed node $s$ and SMPPs from $s$ to them naturally form an SMPP-based tree. We can construct the SMPP-based trees based on each seed node and these trees are disjoint (i.e., no node overlap). With these trees, we can compute the subtree size of each node in the corresponding tree and prioritize inserting the nodes with large subtree sizes, because increasing the probability to influence the node $u$ will also increase the probability to influence the descendants of $u$ in the SMPP-based trees.

\begin{figure}[t]
\centering
\subfloat[iteration $i$.]{
\includegraphics[width=0.2\textwidth]{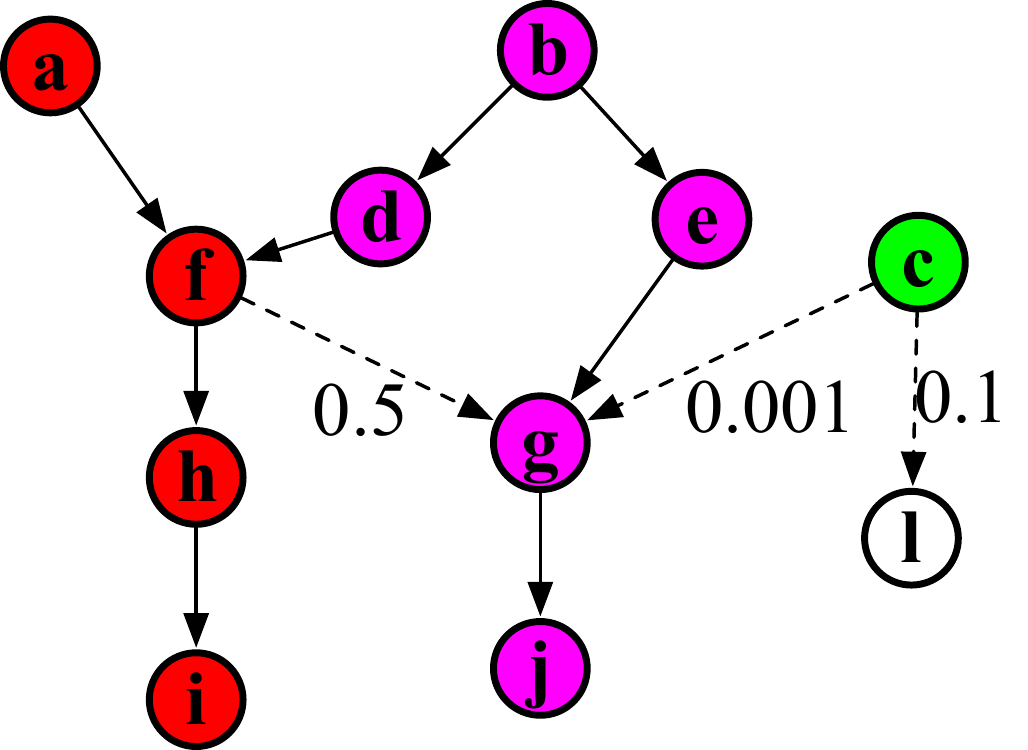}\hspace{-1em}
} 
\hspace{0.5mm}
\subfloat[iteration $i+1$.]{
\includegraphics[width=0.2\textwidth]{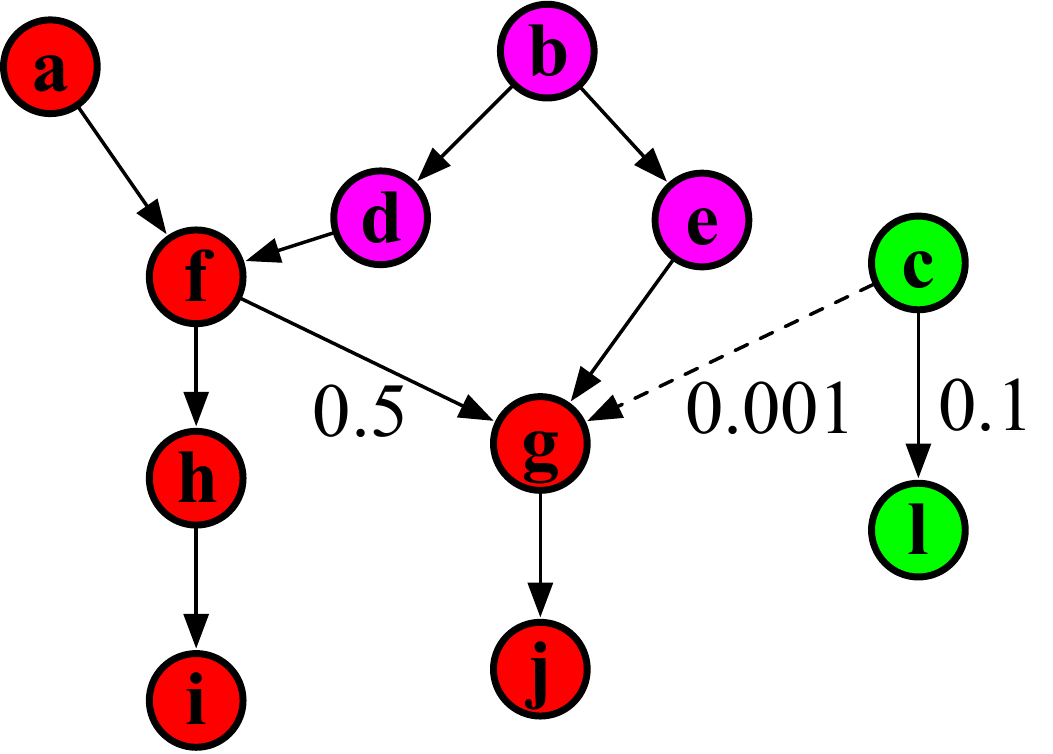}\hspace{-1em}
}

\caption{SMPP-based subgraphs in two iterations.}\label{eg:subgraph}
\label{comparison}

\end{figure}

\begin{example}
Figure~\ref{eg:subgraph} shows an example of the SMPP-based trees of a 2-subnetwork in two consecutive iterations of the \outtermethod{}. Here, $a$, $b$ and $c$ are the seed nodes, all solid edges have a weight of 0.1, nodes with the same color belong to the same SMPP-based trees, and white nodes are candidates for the 2-subnetwork expansion. In iteration $i$, node $g$ is processed before $l$ because $g$ has a larger subtree size (i.e., 2).
We have two candidate incoming edges to node $g$, and $(f,g)$ will be inserted since $f=\argmax_{ u\in \{f,c\} }   \sppr{(r(u),u)} \cdot p_{(u,g)} $ where $r(f)=a$ and $r(c)=c$. In the input 2-subnetwork of iteration $i+1$, the relevant seed nodes of $g$ and $j$ are updated to node $a$ since inserting outgoing edges from them is more likely to increase the influence of $a$. 
\end{example}

The premise of the aforementioned idea requires computing the SMPP between each node $u$ and $r(u)$. A straightforward approach is to enumerate SMPPs from every seed node to $u$ and get the one with the greatest probability. However, this is very expensive in real-world large-scale social networks, especially when $|S|$ is large. To mitigate this issue, we can simply introduce edges with the same influence probability of 1 from a \emph{virtual} node $x$ to every seed node. Afterwards, we only need to adopt a variant of Dijkstra algorithm~\cite{chaoji2012recommendations} to compute the SMPPs from $x$ to the rest of the nodes; the second node in each SMPP must be a seed node, and be the relevant seed node of the end node of this path. With the virtual node $x$, we simplify the case of multiple seed nodes into the case of a single seed node $x$ since edge insertion can be guided by the probability of SMPP from $x$ instead of relevant seed nodes, i.e., 
$$u^*_v=\argmax_{ u\in N_{in}^{G_c}(v) }  1\cdot   \sppr{(r(u),u)} \cdot p_{(u,v)}=\argmax_{ u\in N_{in}^{G_c}(v) }   \sppr{(x,u)} \cdot p_{(u,v)}$$
 Since we transform the \short{} problem for multiple seed nodes into the one for a single virtual node $x$, we can directly adopt \outtermethod{} for the single seed node with minor adjustments: (1) introduce $O(|S|)$ edges with the same influence probability 1 from a virtual node $x$ to every seed node in the initial \ksubnetwork{}; (2) remove the virtual node and edges between it and seed nodes from the output \ksubnetwork{} of \outtermethod{}. 

 \noindent\textbf{Time Complexity}.~Considering that we only change the topology of the \ksubnetwork{}, the previous time complexity analysis for the single case still applies. Thus, the total time complexity is $O(I(|V|+|E|+|S|)\log |V|+|V|d_{m}\log d_{m} )$.

\section{Experiment}\label{sec:experiment}
{In this section, we will conduct experiments on three problems to demonstrate the robustness and effectiveness of our methods:  }

\begin{itemize}[leftmargin = *]

\item The first is our proposed problem, \short{},  where the extensions of the methods in the first problem are compared (Section~\ref{sec:exp-closed}). Note that the core difference between these two problems has been illustrated and please refer to Section~\ref{sec:relatedwork} for details.

\item The second problem is maximizing users' Click-trough Rate in an activity of an online Tencent application, where we deploy our method and evaluate how it helps improve user retentions and interactions (Section~\ref{sec:deploy}). 

\item The third problem is maximizing the influence in \emph{open} social networks via edge insertions, where existing baselines including the state-of-the-art~\cite{chaoji2012recommendations} are compared. The results show that, compared with~\cite{chaoji2012recommendations}, our method \outtermethod{} achieves very competitive results with up to five-orders-of-magnitude speedup. Please refer to the technical report~\cite{TR} for details.

\end{itemize}

\smallskip
\noindent\textbf{Datasets}. Table~\ref{table:dataset} presents all the real-world undirected social networks used. In particular, MOBA and MOBAX correspond to two friendship networks of Tencent multiplayer online battle arena games, RPG corresponds to a friendship network of a role-playing game, and the other datasets are available in~\cite{konect}. Note that each edge is represented twice since the influence propagation is directed. The first four datasets and the last dataset will be used for the second and the third problem respectively.


\smallskip
\noindent \textbf{Environments}. We conduct all experiments on a Linux server with Intel Xeon E5
(2.60 GHz) CPUs and 512 GB RAM. All algorithms are implemented in
Python and our code is available at~\cite{TR}. 

\begin{table}[]
\caption{Dataset Statistics}\label{table:dataset}
\small
\begin{tabular}{cccc}
\hline
\textbf{Dataset} & \textbf{|V|} & \textbf{|E|} & \textbf{Avg. Degree} \\ \hline
Catster &149,700 & 10,898,550 & 73\\
MOBA & 503,029 & 9,372,022 & 19 \\
RPG  & 2,331,047 & 88,227,562 & 38\\
Orkut & 3,072,441 & 234,369,798 & 76\\
MOBAX & 36,201,207&3,281,207,036&90\\ \hline
\end{tabular}
\end{table}

\begin{table}[]
\caption{Effectiveness comparison with $k=30$ and $|S|=50$.}\label{table:orgvsbst}
\small
\begin{tabular}{cccccccc}
\hline
\multirow{2}{*}{\textbf{Dataset}} & \multicolumn{2}{c}{\textbf{Degree}} & \multicolumn{2}{c}{\textbf{FoF}} & \multicolumn{2}{c}{\textbf{Random}} & \multirow{2}{*}{\textbf{\outtermethod{}}} \\ \cline{2-7}
 & \textbf{Ori} & \textbf{Bst} & \textbf{Ori} & \textbf{Bst} & \textbf{Ori} & \textbf{Bst} &  \\ \hline
Catster & 3.1E2 & 4.2E5 & 3.8E2 & 4.8E5 & 3.6E3 & 6.3E5 & \textbf{8.6E5} \\
MOBA & 5.0E6 & 7.4E6 & 6.5E6 & 7.7E6 & 7.0E6 & 8.0E6 & \textbf{9.8E6} \\
RPG & 1.1E6 &2.1E7 &   2.4E6 &2.2E7 &1.9E7 & 2.4E7 & \textbf{3.2E7} \\
Orkut & 7.0E2 & 5.4E7 & 6.8E2 & 5.6E7 & 9.6E5 & 5.9E7 & \textbf{8.2E7} \\ \hline
\end{tabular}

\end{table}

\begin{table}[]
\caption{Running time (s) with $k=30$ and $|S|=50$ where the numbers in brackets under \outtermethod{} refers to the time cost in the graph expansion stage. Note that the number of common friends is pre-computed in \method{Bst-FoF}.}\label{table:orgvsbsttime}
\small
\begin{tabular}{cccccccc}
\hline
\textbf{Dataset} & \textbf{Bst-Degree} & \textbf{Bst-FoF} & \textbf{Bst-Random} & \textbf{\outtermethod{}} \\ \hline
Catster & 1.1E2 & 1.0E2 & 9.8E1 & 1.2E2 (9.1E1)\\
MOBA &1.4E2& 1.3E2 & 1.2E2 & 1.8E2 (1.3E2)\\
RPG & 9.6E2& 9.1E2 & 8.8E2 & 9.8E2 (7.3E2)\\
Orkut & 4.4E3& 4.2E3& 4.1E3 & 5.1E3 (3.6E3)\\ \hline
\end{tabular}

\end{table}

\subsection{Experiment on the \short{} problem}\label{sec:exp-closed}
 We conduct seven experiments to demonstrate that: (1) the extension of existing baselines on open social networks cannot well address our problem but their performance can be significantly boosted with the help of our method \outtermethod{} (Exp1); (2) \outtermethod{} consistently and significantly outperforms all the boosted baselines with different $|S|$ and $k$ (Exp2-3); (3) how $k$ and $|S|$ impact the convergence of \outtermethod{} (Exp4); (4) \outtermethod{} is highly scalable to handle large-scale datasets (Exp5); (5) the influence convergence is also a good indicator of the node size convergence in the diffusion network (Exp6); (6) the influence of seeds in the diffusion network, produced by \outtermethod{} with a small edge size, is very competitive with the influnce of the seeds in the original network (Exp7).

\begin{figure}[t]
\includegraphics[width=0.4\textwidth]{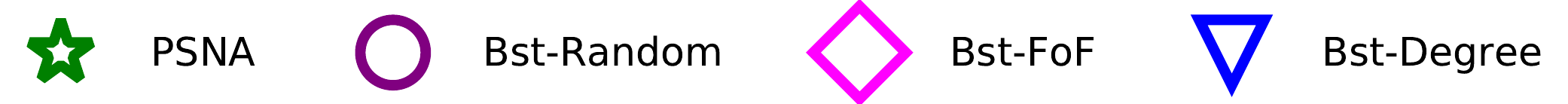}
\vspace{-0.25cm}

\centering
\subfloat[Catster]{
\includegraphics[width=0.124\textwidth]{./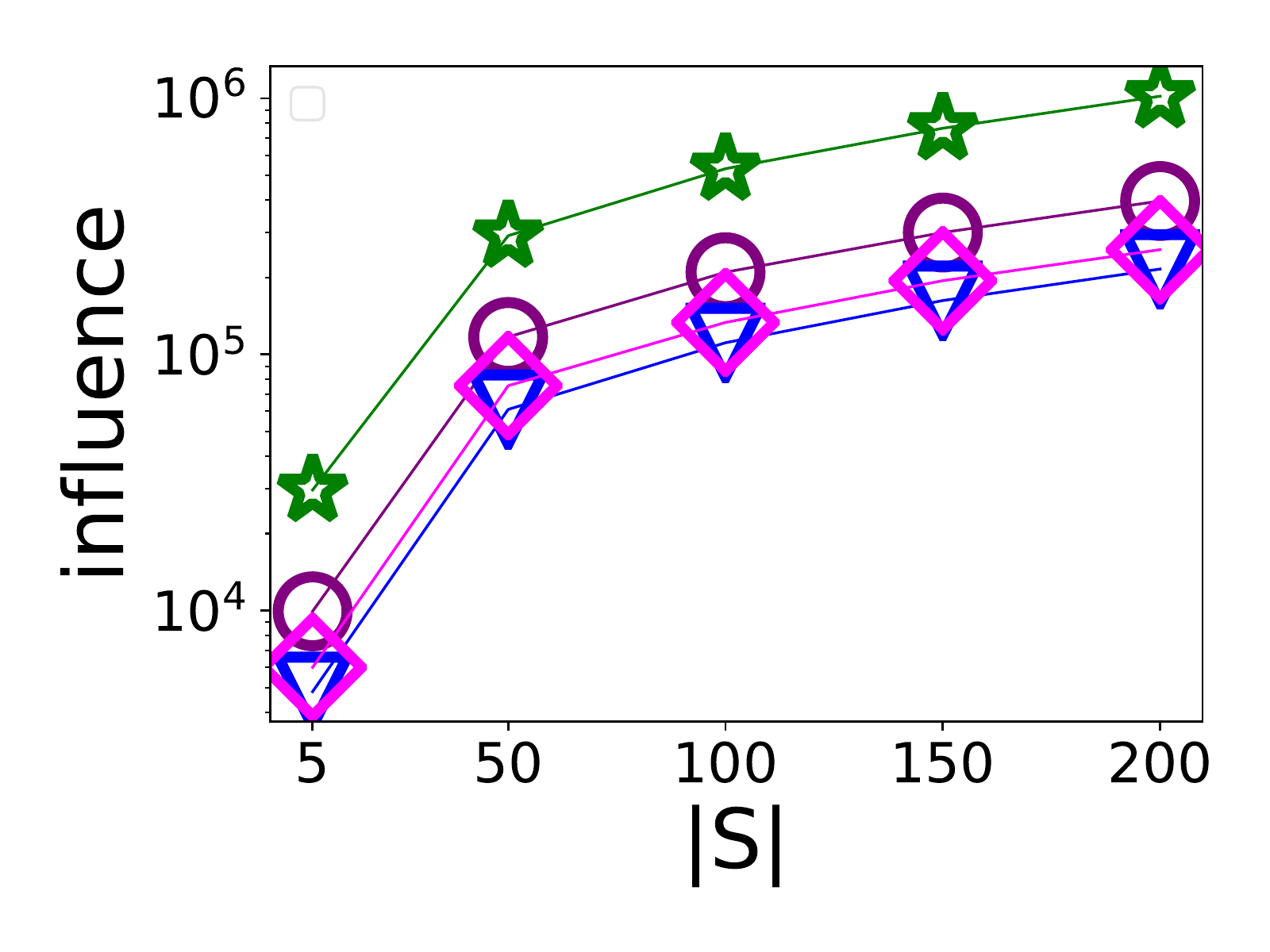}\hspace{-1em}
}
\hspace{0.1mm}
\subfloat[MOBA]{
\includegraphics[width=0.124\textwidth]{./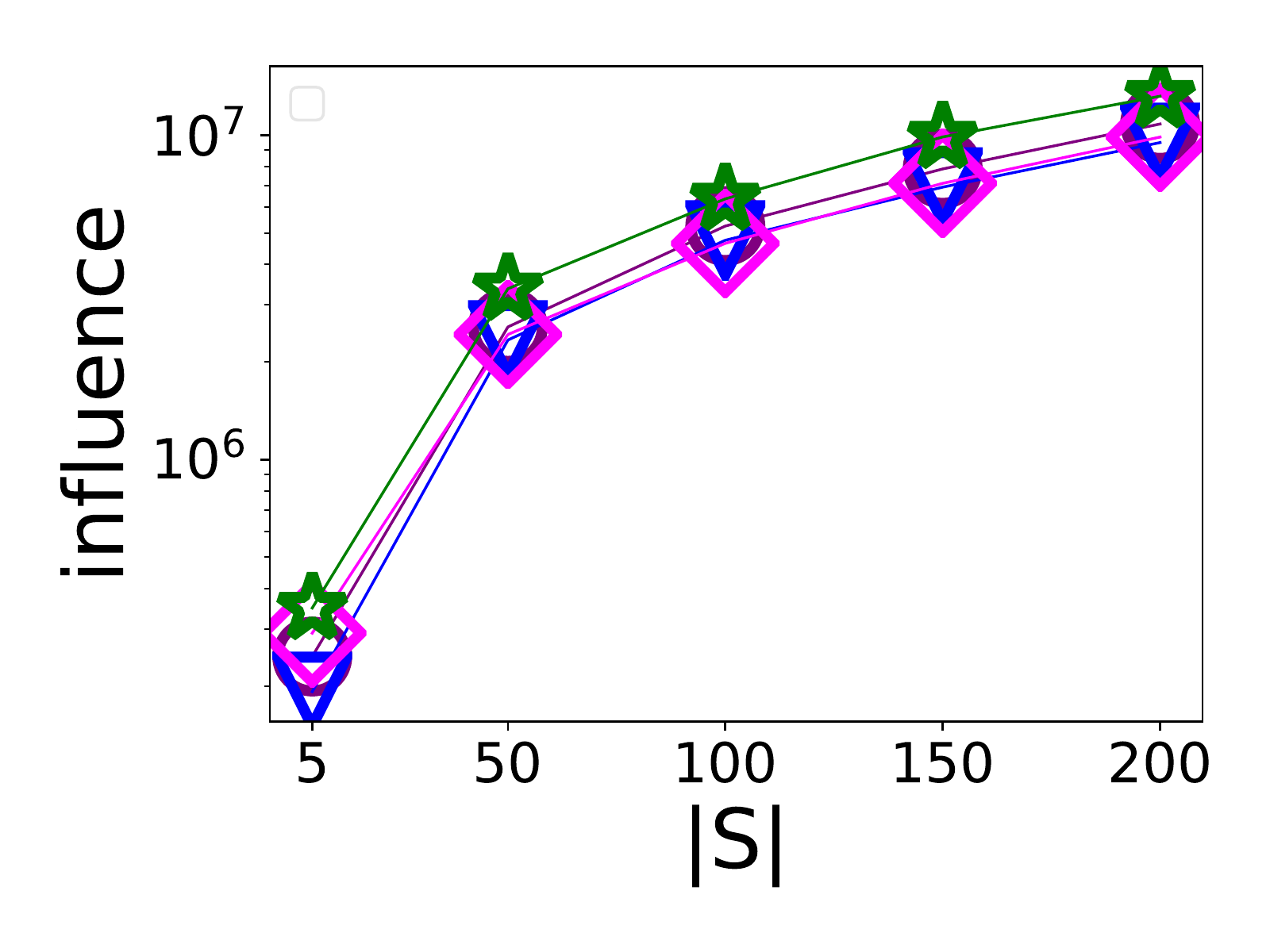}\hspace{-1em}
} 
\hspace{0.1mm}
\subfloat[RPG]{
\includegraphics[width=0.124\textwidth]{./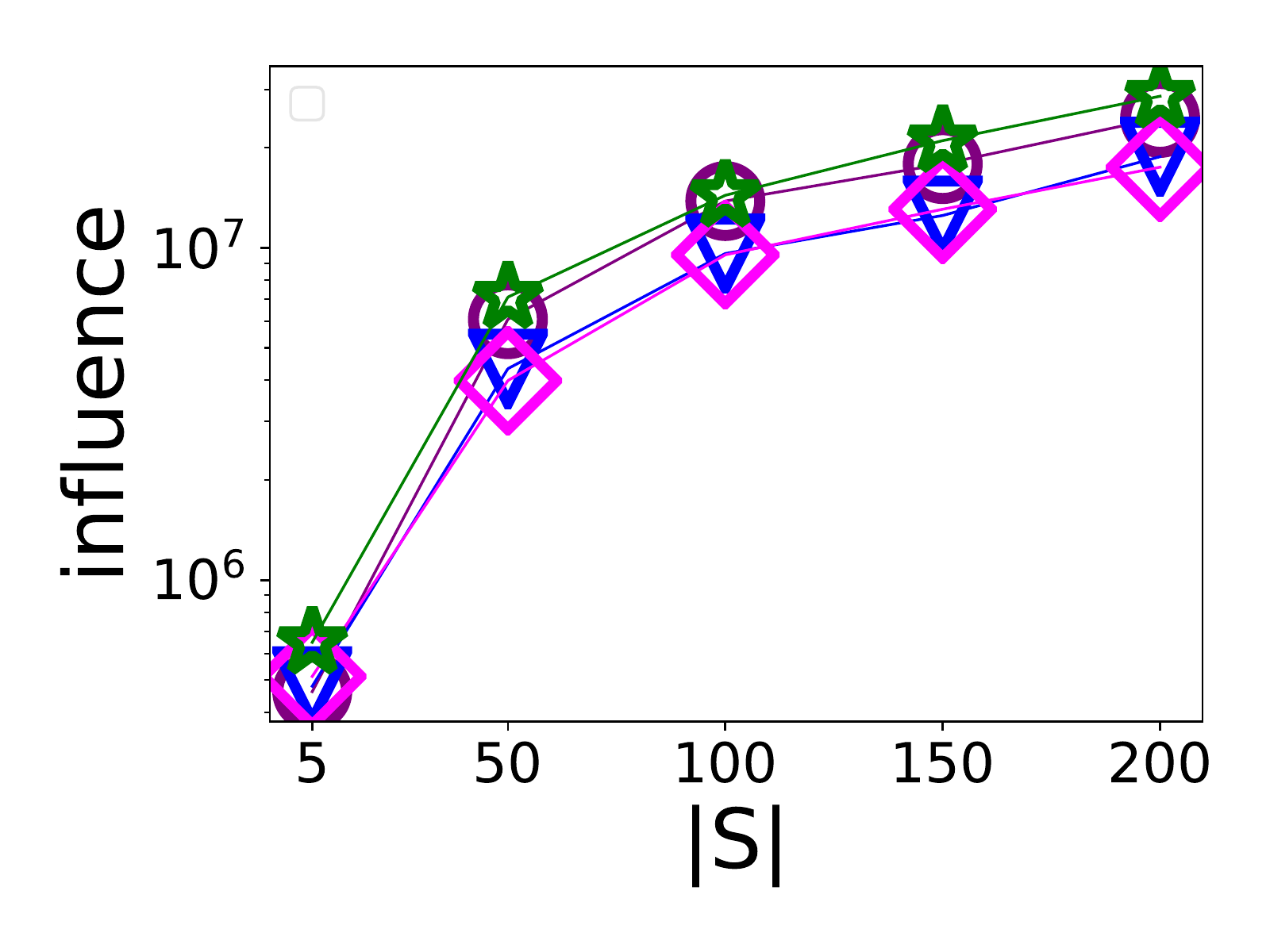}\hspace{-1em}
}
\hspace{0.1mm}
\subfloat[Orkut]{
\includegraphics[width=0.124\textwidth]{./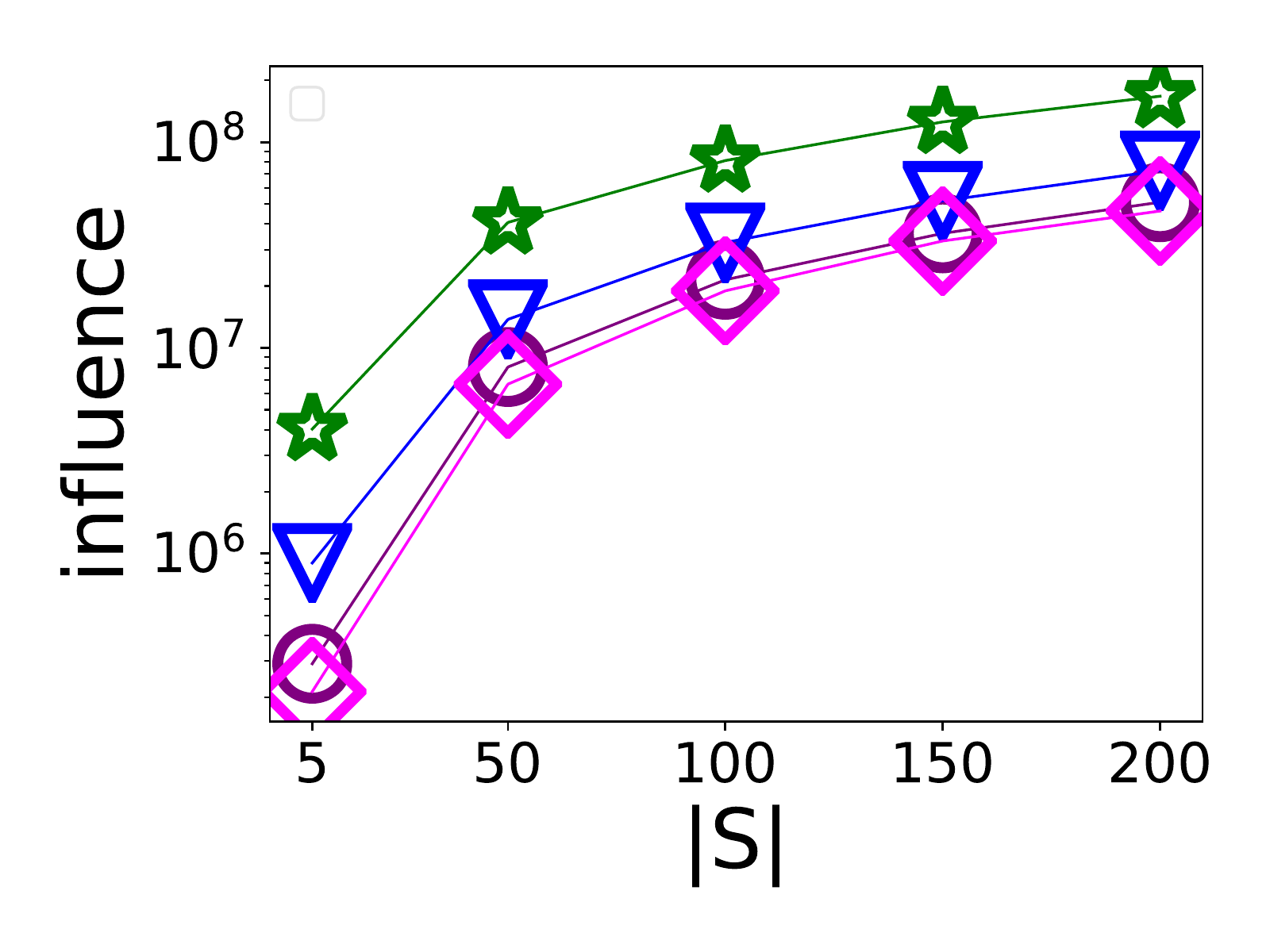}\hspace{-1em}
}

\vspace{-0.7em}
\caption{Performance Comparison with different $|S|$.}\label{fig:diffs}
\end{figure}

\begin{figure}[t]
\includegraphics[width=0.4\textwidth]{./drawing2/legend1}
\vspace{-0.25cm}

\centering

\subfloat[Catster]{
\includegraphics[width=0.124\textwidth]{./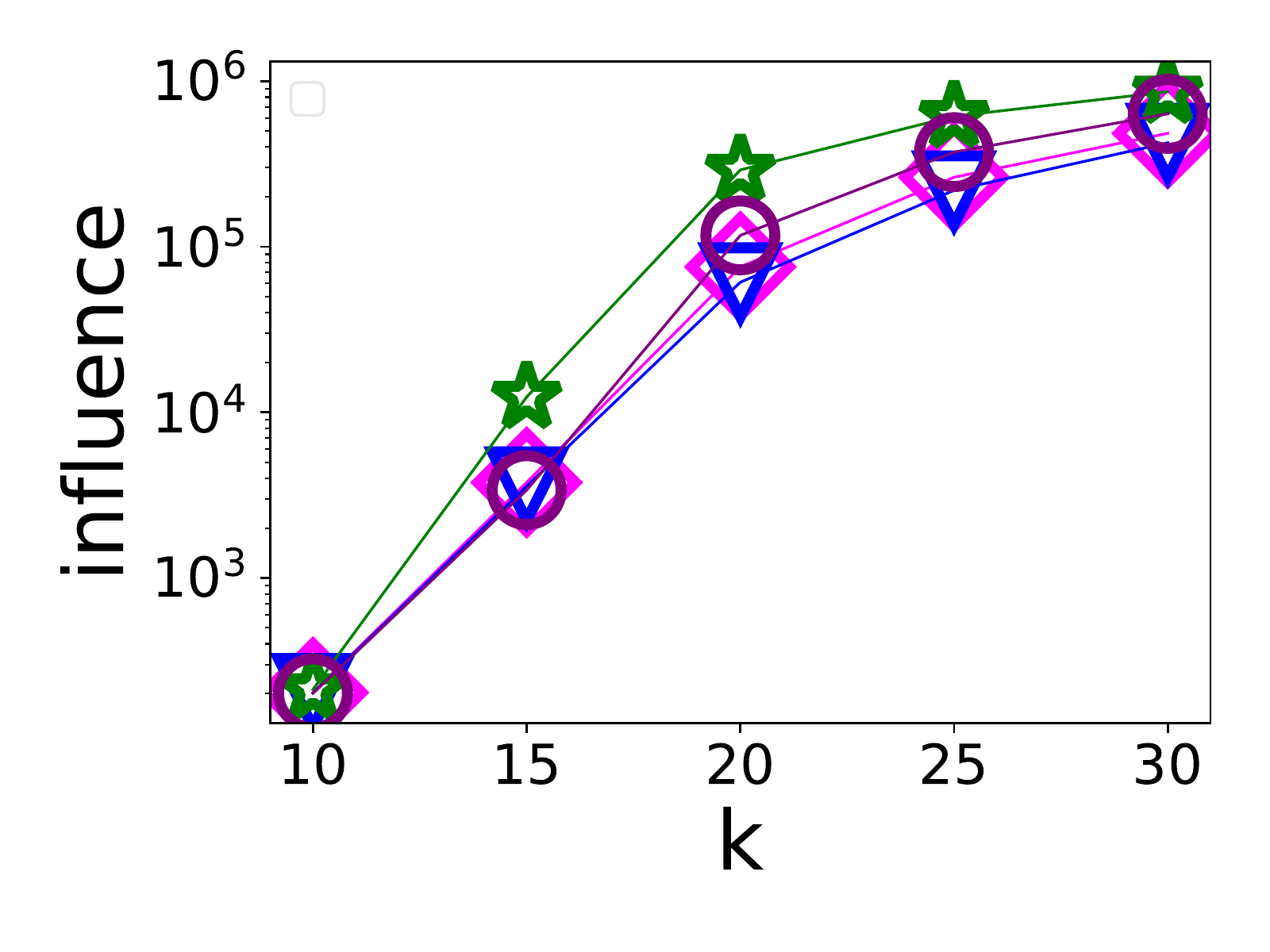}\hspace{-1em}
}
\hspace{0.1mm}
\subfloat[MOBA]{
\includegraphics[width=0.124\textwidth]{./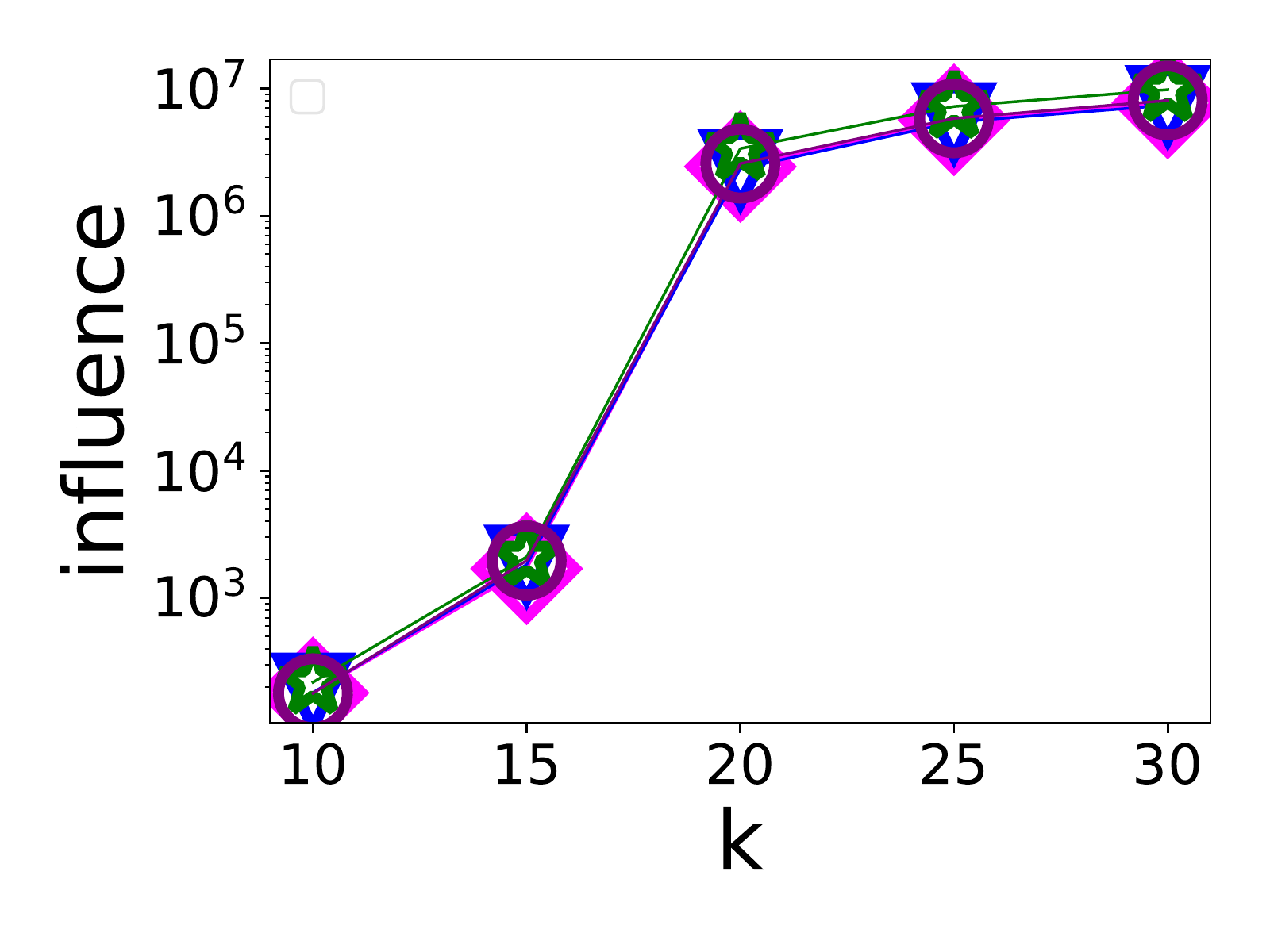}\hspace{-1em}
} 
\hspace{0.1mm}
\subfloat[RPG]{
\includegraphics[width=0.124\textwidth]{./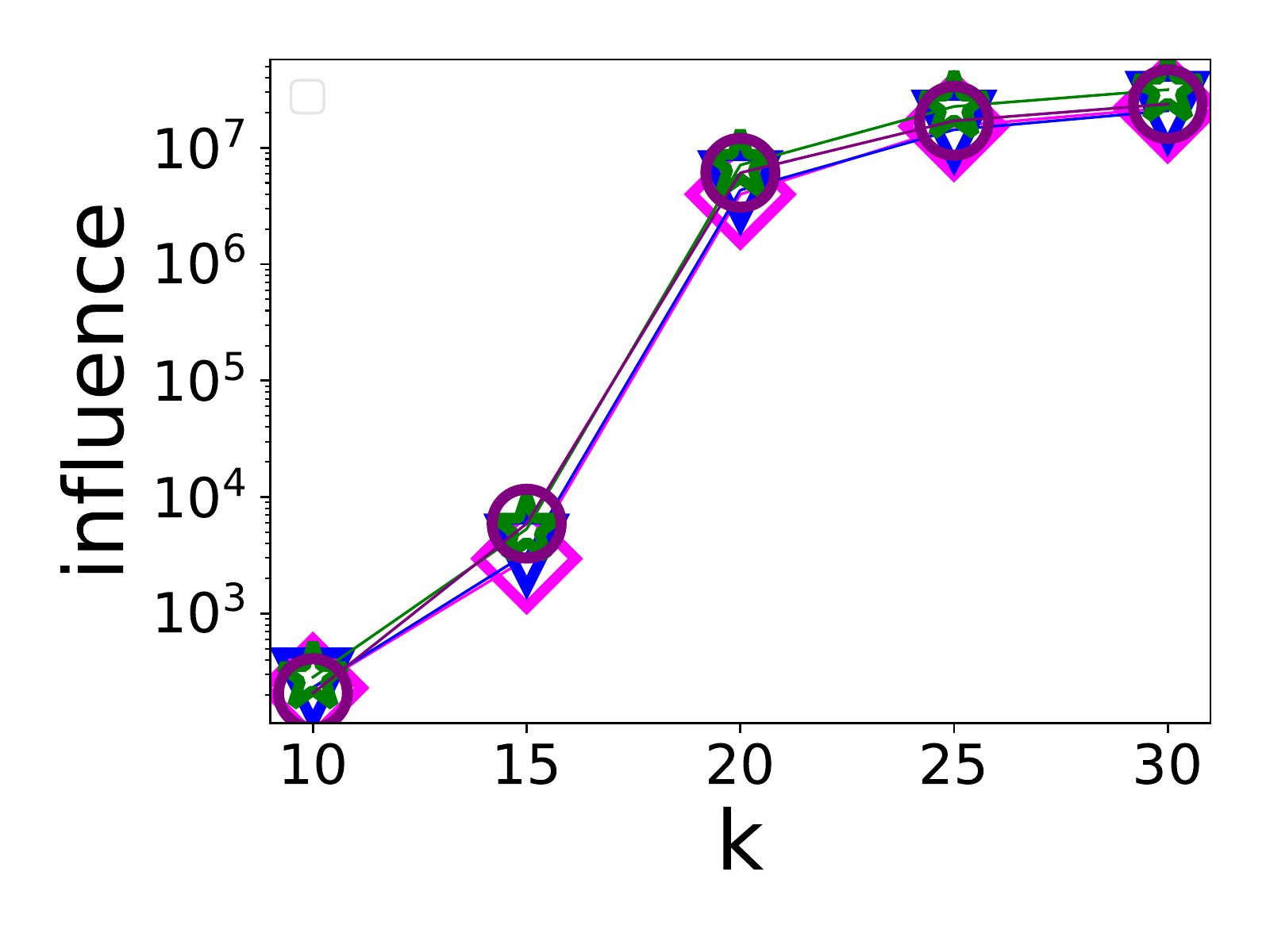}\hspace{-1em}
}
\hspace{0.1mm}
\subfloat[Orkut]{
\includegraphics[width=0.124\textwidth]{./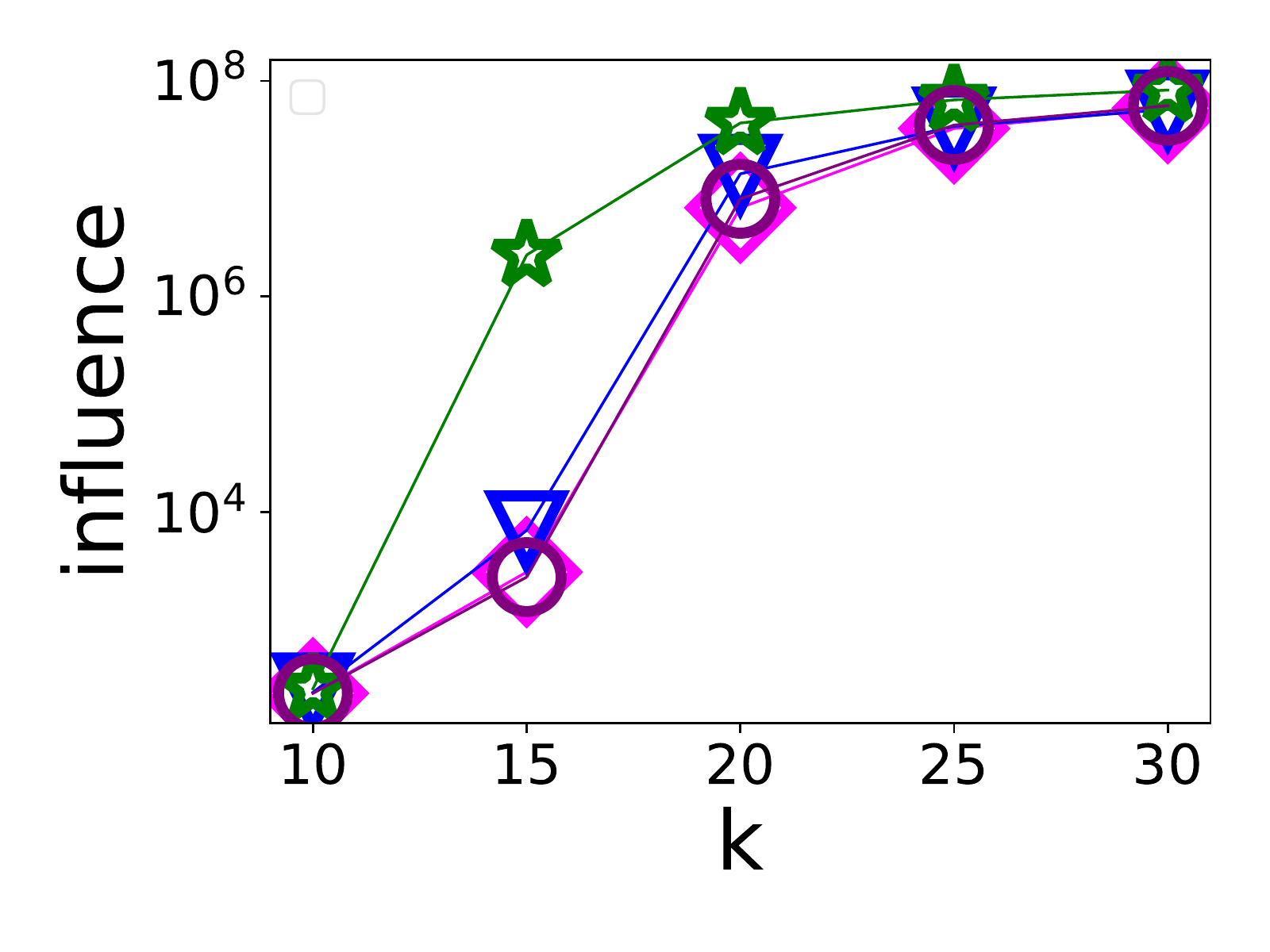}\hspace{-1em}
}

\vspace{-0.7em}
\caption{Performance Comparison with different $k$.}\label{fig:diffk}

\end{figure}

\smallskip
\noindent \textbf{Baselines}. We compare three baselines (listed below) on open social networks and their boosted versions. Thus, in total, there are six baselines. For ease of description, we assume out-degrees of nodes in the original network to be greater than $k$.

\begin{itemize}[leftmargin = *]
\item \method{Degree} based method where, for candidates sharing the same source node, top $k$ edges with the highest degrees of end nodes are selected.

\item \method{FoF} based method where,  for candidates sharing the same source node, top $k$ edges whose source and end nodes share the highest number of common friends are selected.

\item \method{Random} based method where $k$ edges are randomly selected from candidates sharing the same source node. Its performance is reported as the average over five independent runs.

\end{itemize}

Since each baseline has two versions, original and boosted, we use the prefixes `Ori' and `Bst' to distinguish them. The boosted versions are built upon the diffusion network generated by the \emph{expansion} stage of \outtermethod{} (i.e., Lines 5-9 in Algorithm~\ref{alg:practicalframework}) and work similar to the original ones. The only difference is that, in the boosted version, some important edges have been inserted by \outtermethod{} and these baselines only `fill up' recommendations for those nodes in the generated diffusion network. Hence, if we assume that the out-degrees of nodes in the original network are greater than $k$ and our method has already inserted $h <k$ outgoing edges from $u$, boosted versions just need to select $k-h$ outgoing edges for $u$.

\smallskip
\noindent \textbf{Edge Weight Settings}. Considering that the influence probability of edges are different in practice, we adopt the commonly used trivalency model~\cite{chen2010scalable} for Catster and Orkut without edge attributes. It randomly assigns a weight for each edge, from $\{10^{-1},10^{-2},10^{-3}\}$. 

For MOBA and RPG, we assign edge weights based on the intimacy of friendship. In MOBA and RPG, each pair of friends has different levels of intimacy which describes the number of interactions (e.g., the number of games they play together, the number of gifts sent from one to another). Since the pair-wise intimacy in both MOBA and RPG is represented as integers, here we transform it into an influence probability within [0,1] for deployment in the IC model. According to the Susceptible-Infected-Recovered (SIR) model~\cite{anderson1992infectious} and heterogeneous mean-filed theory~\cite{newman2002spread,cohen2011resilience,castellano2010thresholds,xie2021detecting}, the lowest influence probability should not be smaller than a constant $\lambda$ times $\beta_c= \sum_{v \in V} |N_{out}(v)| / (\sum_{v \in V} |N_{out}(v)|^2- \sum_{v \in V} |N_{out}(v)|)$, where $\beta_c$ is the epidemic threshold in the SIR model and calculated as 0.024 and 0.001 on MOBA and RPG, respectively. If $\lambda \beta_c$ is too small, the influence of the seeds will be quite limited. If $\lambda \beta_c$ is too large, the influence spread can cover a large percentage of nodes, irrespective of where it originated, and the methods' performance cannot be well compared. By following existing studies~\cite{shu2015numerical,lu2016h}, we
determine $\lambda$ by simulation on real networks. Specifically, we determine $\lambda_1$ and $\lambda_2$ and control the edge weights within the range $[\lambda_1\beta_c, \lambda_2\beta_c]$. Each influence probability $p_{(u,v)}= ((u,v)_{I} - \min_{e\in E} e_{I}) / (\max_{e\in E} e_{I} -\min_{e\in E} e_{I}  ) (\lambda_2 -\lambda_1)\beta_c +\lambda_1\beta_c$ where $(u,v)_{I}$ denotes the intimacy between $u$ and $v$. Note that we have checked several settings of $\lambda_1$ and $\lambda_2$ and these different settings will not affect the conclusion (i.e., performance ranking of methods). To test these methods' robustness to the influence probability distributions, in both MOBA and RPG, we set the range $[\lambda_1\beta_c, \lambda_2\beta_c]$ as $[0.007,0.01]$ which has a dramatically different distribution from the trivalency model deployed for other datasets.

\smallskip
\noindent \textbf{Parameter Settings}. We randomly select $|S|$ (50 by default) nodes from the top 1\% nodes with the highest degrees in the original network as the seed nodes, set $k=20$ and the error ratio $\epsilon=10^{-4}$ by default. We evaluate solution quality based on the IC-based influence spread via 10,000 Monte Carlo simulations.

\smallskip
\noindent \textit{\textbf{Exp1 - Case study on the two versions of baselines}}. Table~\ref{table:orgvsbst} and Table~\ref{table:orgvsbsttime} compare the effectiveness and efficiency with $k=30$ and $|S|=50$. We have four main observations:
\vspace{-1mm}
\begin{enumerate}[leftmargin = *]
\item \method{Org-Random} is more effective than the original versions of other baselines. The reason is that nodes with high degrees will `attract' much more incoming edges in other strategies (i.e., \method{Org-Degree} and \method{Org-FoF}) which `waste' a lot of recommendation opportunities to repeatedly influence/activate these nodes. 

\item The boosted versions can achieve about five-orders-of-magnitude larger influence than their original counterpart but can still be notably outperformed by \outtermethod{}. Specifically, \outtermethod{} can still outperform the second best performer by at least 23\%-39\% on different datasets respectively, which demonstrates the effectiveness of both two stages (i.e., the expansion and filling stages). 

\item We also compute the constitution of the edges recommended by \outtermethod{} in the boosted baselines, and the result shows that \outtermethod{} only contributes 17\%-27\% of total edges in the diffusion network. The significant performance improvement of baselines with limited involvement of \outtermethod{} further demonstrates the effectiveness of \outtermethod{} in terms of identifying important edges and nodes for increasing the seeds' influence. 

\item \outtermethod{} is very competitive with other boosted baselines in terms of running time, because all methods involve the \emph{expansion} stage of \outtermethod{} (i.e., Lines 5-9 in Algorithm~\ref{alg:practicalframework}) which dominates the total computational cost.
\end{enumerate}
\vspace{-1mm}

Due to the poor performance of the original baselines, we only use the boosted versions for comparison in the rest experiments.

\smallskip
\noindent \textit{\textbf{Exp2 - Effectiveness comparison with different $|S|$}}. 
Figure~\ref{fig:diffs} compares the performance with different $|S|$. Our method \outtermethod{} consistently outperforms other methods across all instances while the performance of other methods is not stable, e.g., \method{Bst-Random} outperforms \method{Bst-Degree} on Catster but the latter is more effective on Orkut. Another interesting observation is that the performance ranking of methods on the same dataset is consistent across different $|S|$. We think that a good edge recommendation for a non-seed user $u$ under an instance with a small $|S|$ can also be an effective recommendation for $u$ as the seed user under an instance with a great $|S|$. Thus, the diffusion networks generated by the expansion stage of \outtermethod{} under different $S$ can be similar such that edges chosen by a specific baseline upon these networks have large overlaps, which explains this observation.

\smallskip
\noindent \textit{\textbf{Exp3 - Effectiveness comparison with different $k$}}. 
Figure~\ref{fig:diffk} compares the performance at different $k$ across all datasets. We have four main observations: 
\vspace{-1mm}
\begin{enumerate}[leftmargin = *]
\item \outtermethod{} outperforms the boosted baselines and can achieve up to two-orders-of-magnitude larger influence (e.g., on Orkut). 
\item The performance gap becomes smaller when $k$ is larger. The reason is that all methods make almost the same recommendation strategy to low-degree nodes since their out-degrees are close to or even smaller than $k$. In this case, the number of combinations of $k$ outgoing edges of these nodes is very limited. This observation also explains why the performance gap is larger on datasets with greater average degrees. 
\item When $k$ is small (e.g., 10), all methods achieve similar influence since it is barely possible for any method to achieve large influence with very limited edges. 
\item The performance gap on MOBA and RPG is not as significant as that on other datasets. That is because the average edge weight on MOBA and RPG are notably larger. Thus, regardless of how the diffusion network is generated, seeds can easily influence many nodes with sufficient influence probabilities. 
\end{enumerate}
\vspace{-1mm}
      
Note that the performance difference may not be easily distinguished visually due to the log scale of y axis, but we have shown the notable performance difference with $k=20$ and $k=30$ in Figure~\ref{fig:diffs} and Table~\ref{table:orgvsbst} at a finer granularity already. 

\begin{figure}[t]
\includegraphics[width=0.4\textwidth]{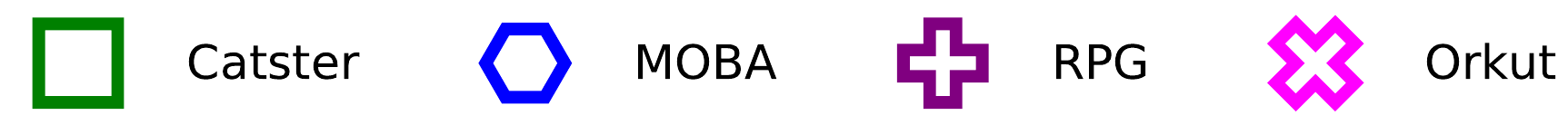}
\vspace{-0.45cm}

\centering
\subfloat[varing $k$]{
\includegraphics[width=0.2\textwidth]{./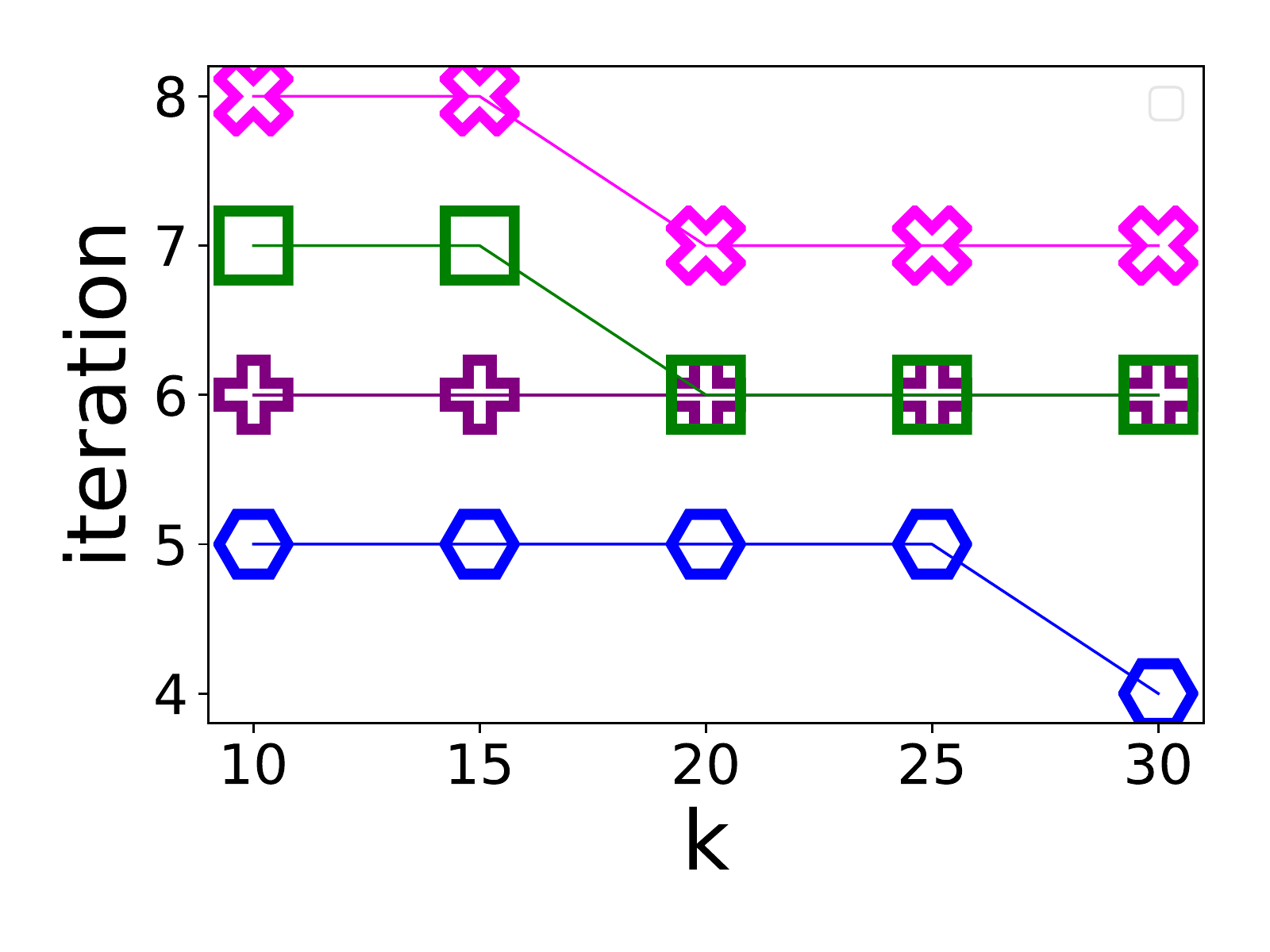}\hspace{-1em}
} 
\hspace{0.5mm}
\subfloat[varing $|S|$]{
\includegraphics[width=0.2\textwidth]{./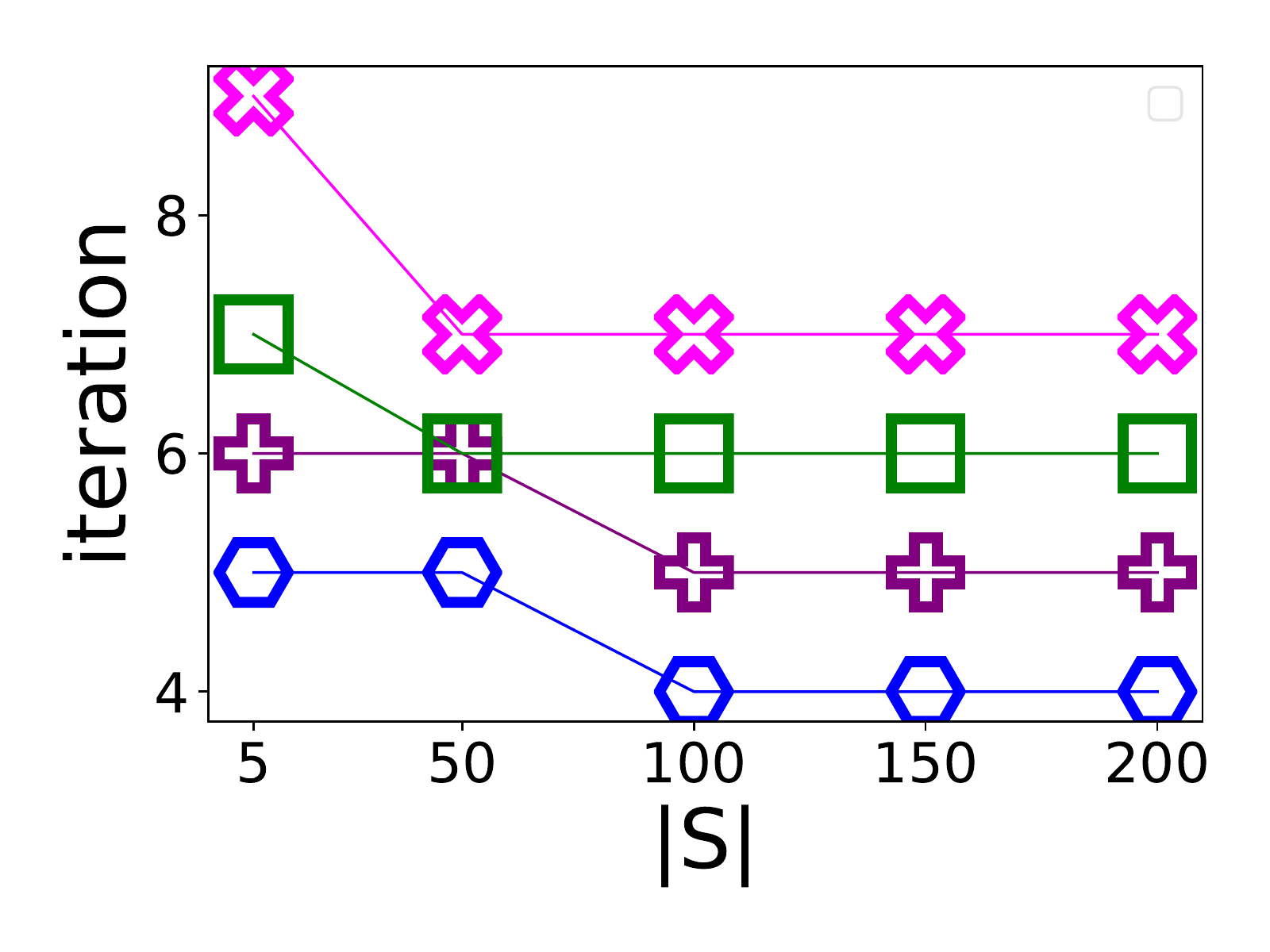}\hspace{-1em}
}

\vspace{-1.em}
\caption{The number of iterations in the expansion stage with different $k$ and $|S|$.}\label{fig:convergence}
\end{figure}

\begin{figure}[t]
\includegraphics[width=0.4\textwidth]{./drawing2/legend2}
\vspace{-0.45cm}

\centering
\subfloat[varing $k$]{
\includegraphics[width=0.21\textwidth]{./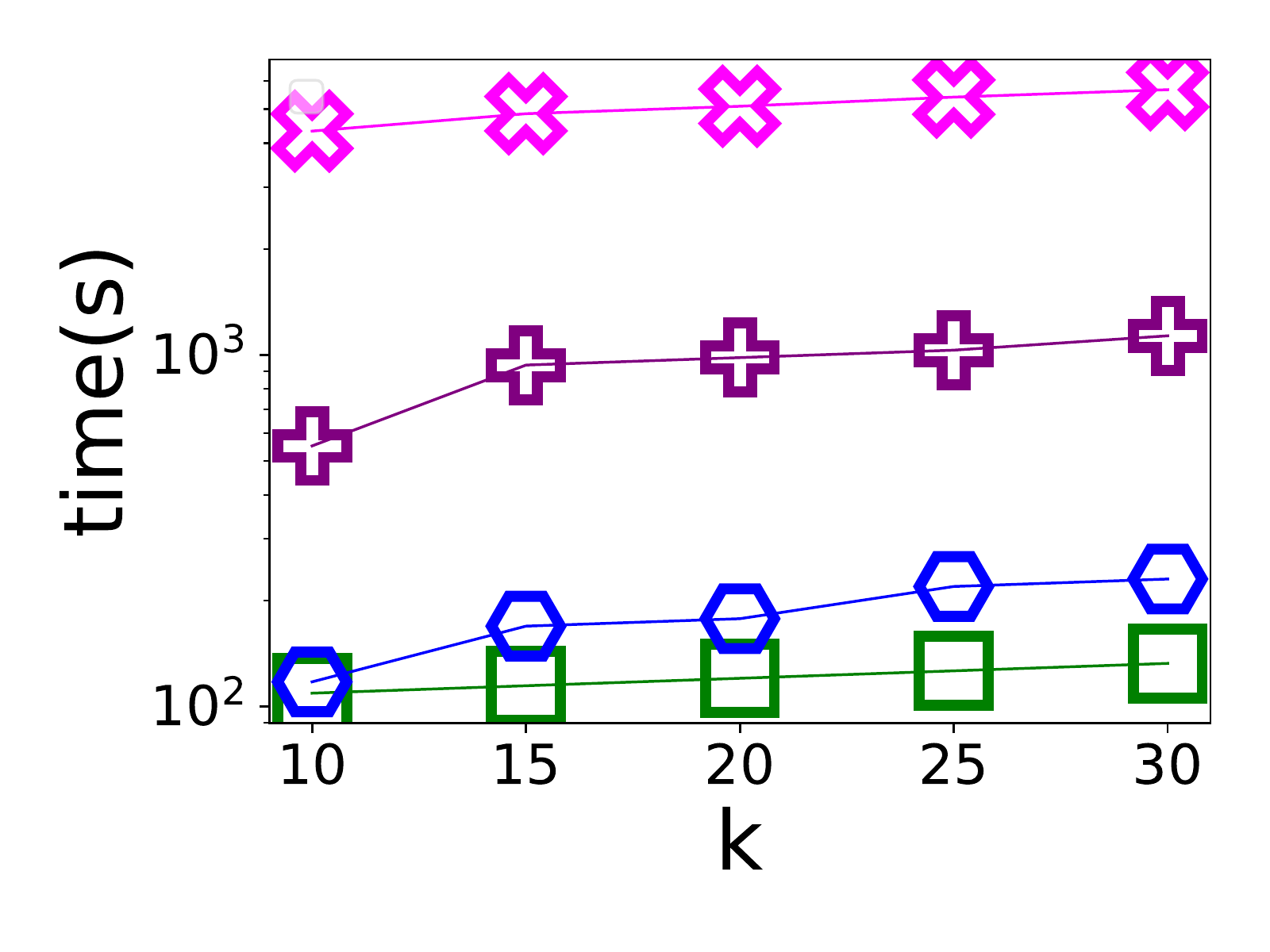}\hspace{-1em}
} 
\hspace{0.5mm}
\subfloat[varing $|S|$]{
\includegraphics[width=0.21\textwidth]{./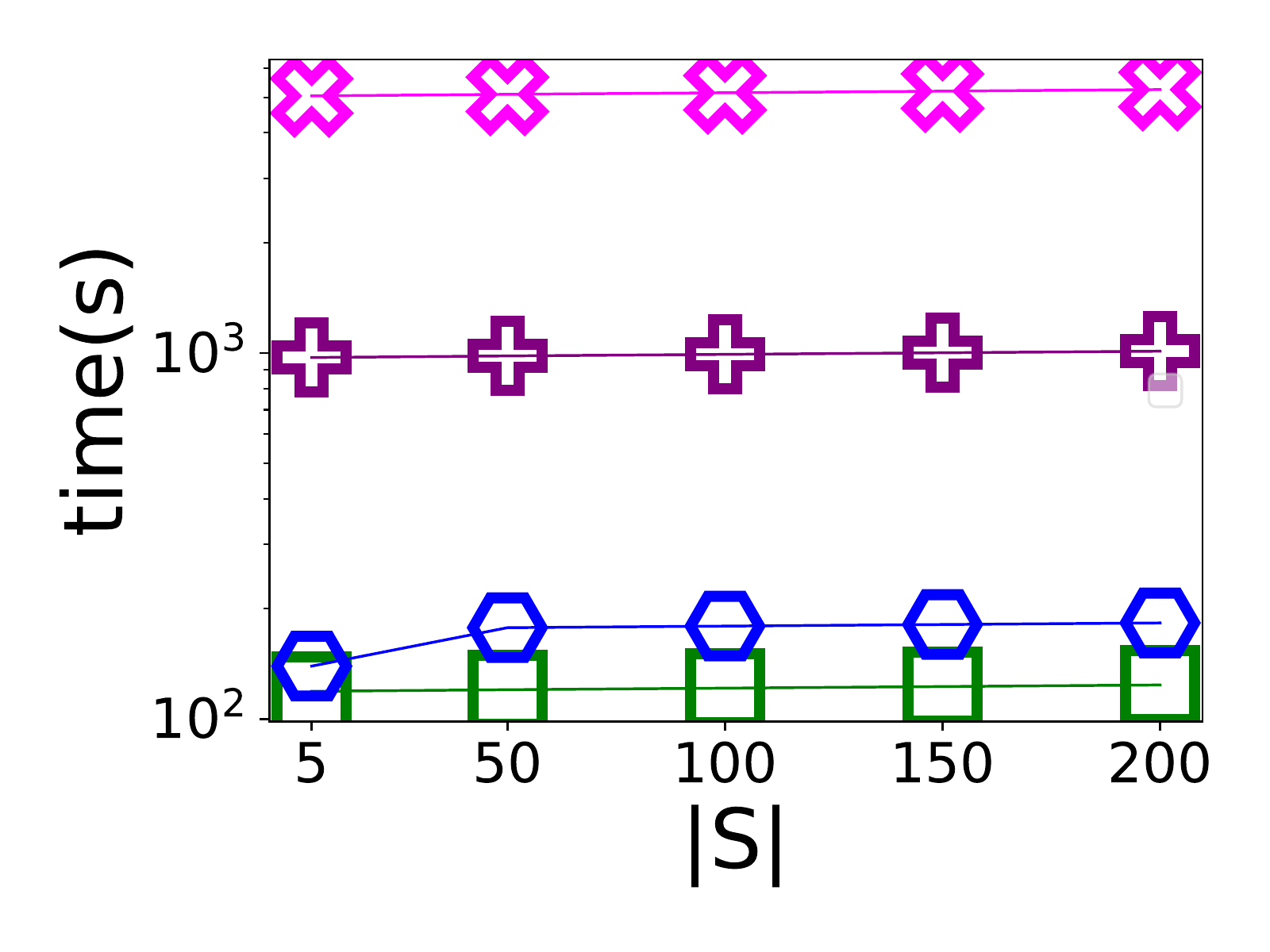}\hspace{-1em}
}

\vspace{-1.em}
\caption{Running time of \outtermethod{} with different $k$ and $|S|$.}\label{fig:efficiency}
\end{figure}

\begin{figure}[t]
\includegraphics[width=0.4\textwidth]{./drawing2/legend2}

\centering
\includegraphics[width=0.2\textwidth]{./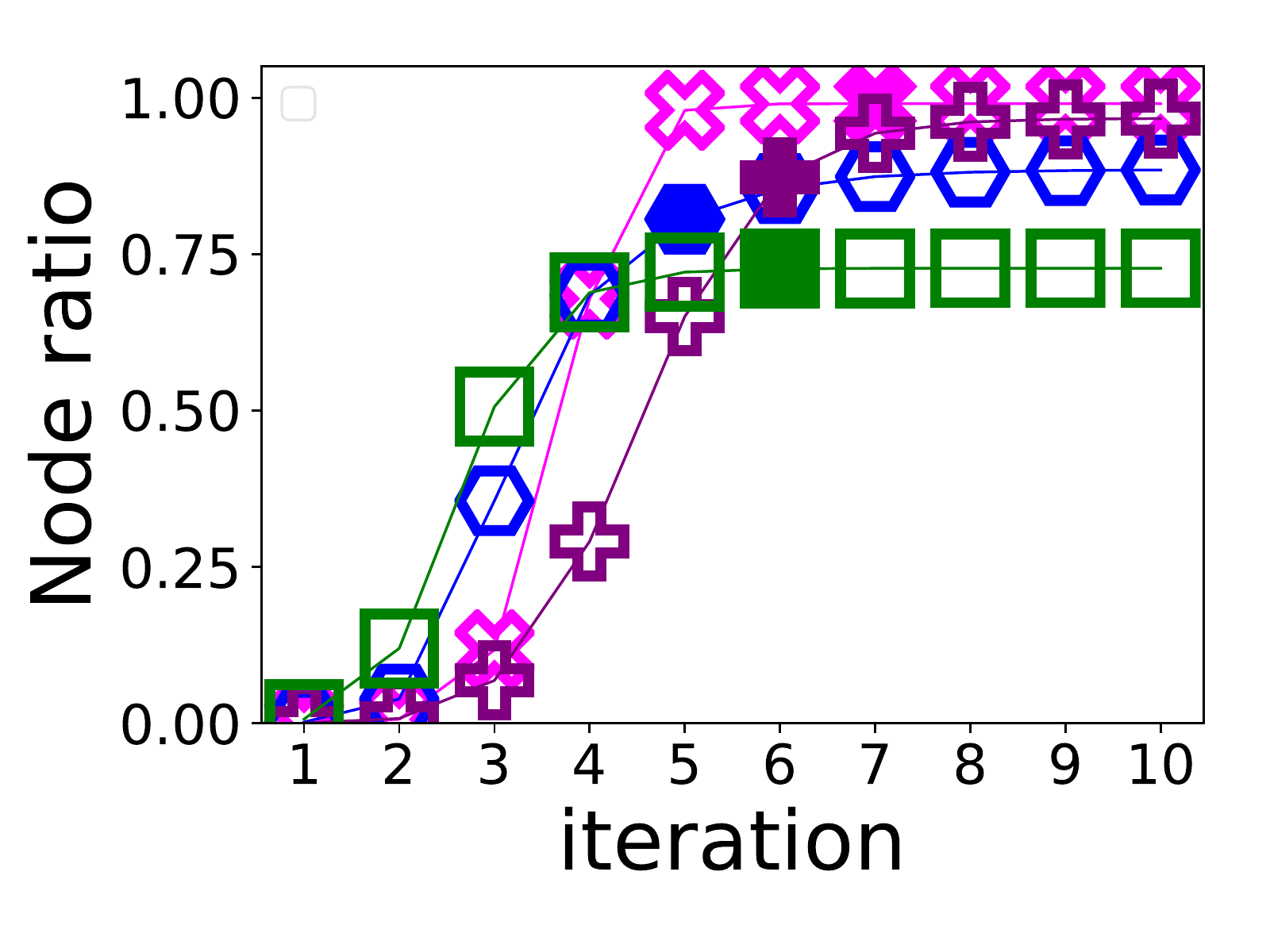}\hspace{-1em}

\vspace{-1.5em}
\caption{The percentage of the nodes in the diffusion network over the one in the original network with different iterations in the expansion stage of \outtermethod{}, where the iterations with solid fill refer to the convergence points under different datasets repsectively.}\label{fig:noderatio}
\end{figure}

\begin{figure}[t]
\includegraphics[width=0.33\textwidth]{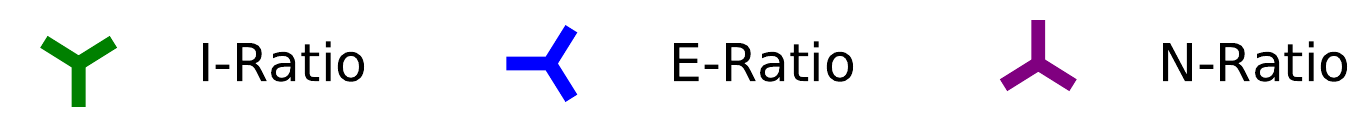}
\vspace{-0.45cm}

\centering
\subfloat[RPG]{
\includegraphics[width=0.2\textwidth]{./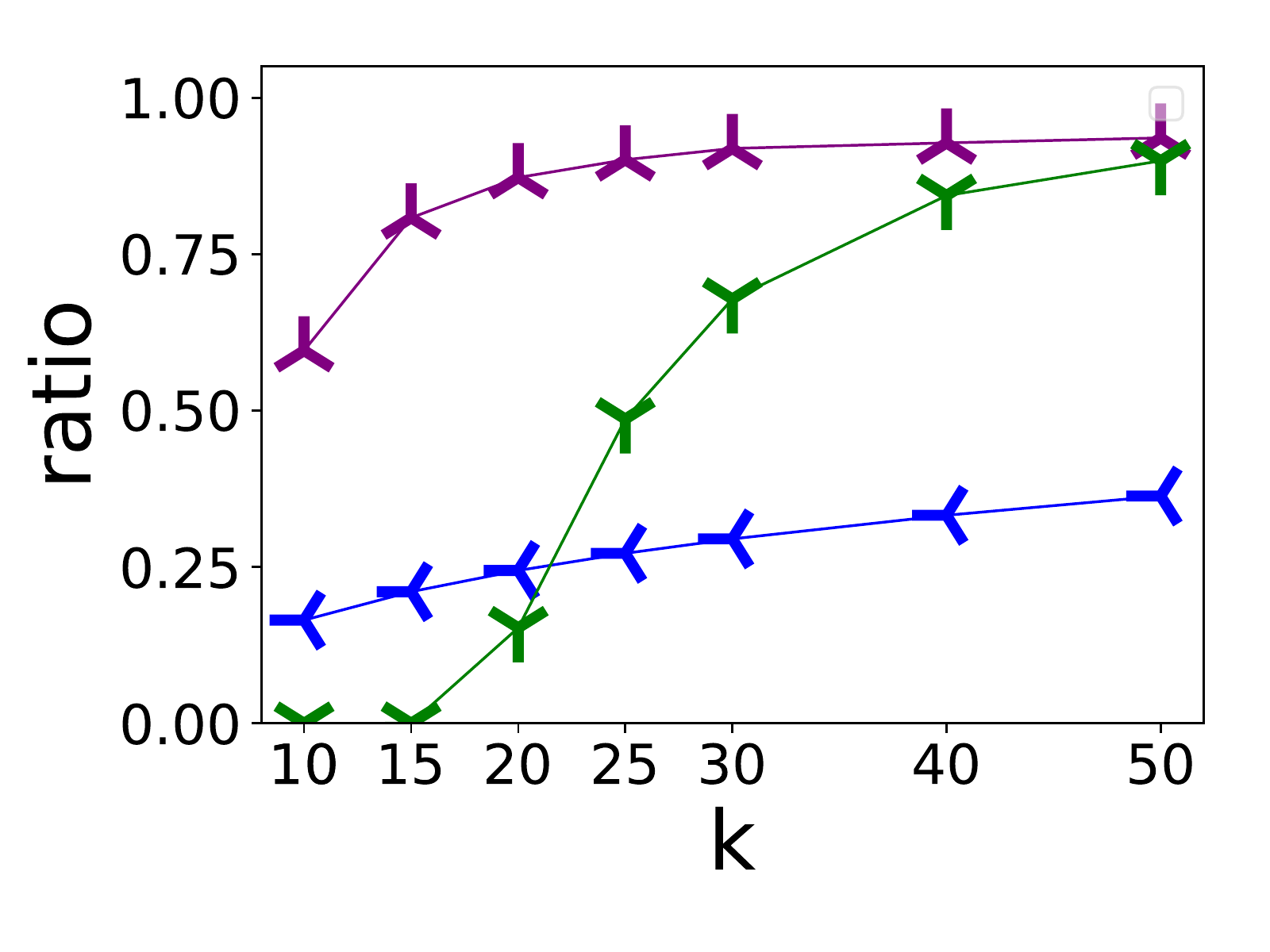}\hspace{-1em}
} 
\hspace{0.5mm}
\subfloat[Orkut]{
\includegraphics[width=0.2\textwidth]{./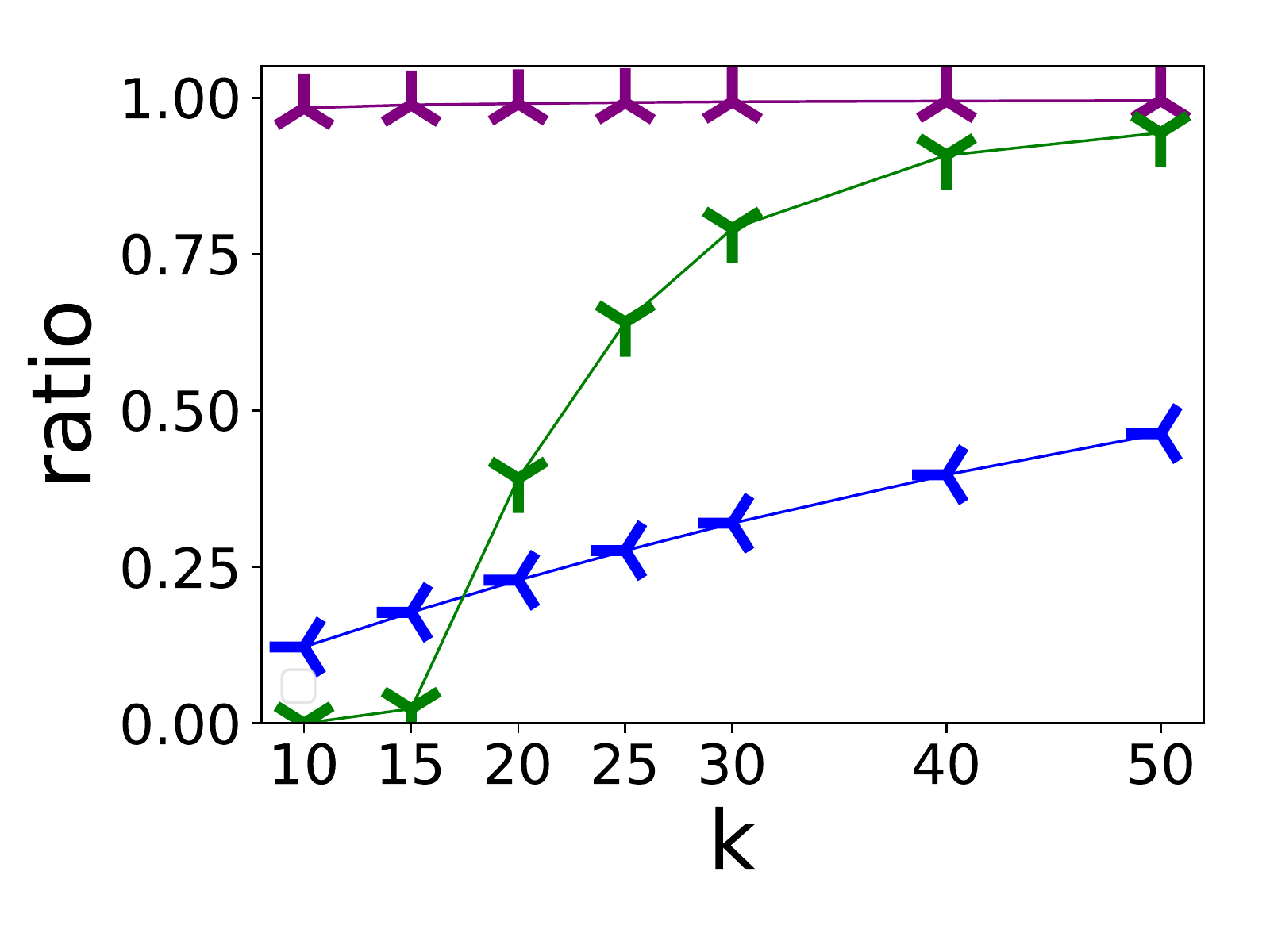}\hspace{-1em}
}

\caption{The influence ratio (I-Ratio), edge ratio (E-Ratio) and node ratio (N-Ratio) in the diffusion network produced by \outtermethod{} with different $k$.}\label{fig:INERatio} 

\end{figure}

\smallskip
\noindent \textit{\textbf{Exp4 - Impact of $k$ and $|S|$ on \outtermethod{} convergence}}. 
Figures~\ref{fig:convergence} (a) and (b) show the number of iterations in the expansion stage of \outtermethod{} when varing $k$ and $|S|$ respectively. Interestingly, when $k$ or $|S|$ increases, the number of iterations for convergence is non-increasing. This is because \outtermethod{} will include more nodes into the current \ksubnetwork{} in each iteration such that it may take fewer iterations to identify all important nodes for influence spread.

\smallskip
\noindent \textit{\textbf{Exp5 - Scalability study on \outtermethod{}}}. 
Figures~\ref{fig:efficiency} (a) and (b) show the runtime of our (full-stage) \outtermethod{} when varing $k$ and $|S|$. As $k$ grows, more nodes tend to be included in the diffusion network and more edges tend to be `filled up' in the filling stage, hence costing more time. On the other hand, increasing $|S|$ does not notably incur more computation costs, for two reasons: 1) we transform the case of multiple seeds into the one of a single virtual node by introducing only $|S|$ extra edges into the diffusion network; 2) the size of the diffusion networks generated under instances of different $|S|$ can be similar as discussed earlier. It is obvious to see that \outtermethod{} is highly scalable and efficient, since it can produce good recommendation on datasets with millions of nodes and hundred millions of edges within two hours (e.g., Orkut). This performance is promising since such recommendation is usually deployed for a long-term usage.

\smallskip
\noindent \textit{\textbf{Exp6 - Impact of iterations on the diffusion network node size}}. 
Figure~\ref{fig:noderatio} shows how the node size of the diffusion network in the expansion stage of \outtermethod{} grows iteratively. Here, the iterations with solid fill refer to the convergence points. Note that the expansion stage is terminated after convergence for other experiments and here we force it to run 10 iterations. It is obvious that, the node size in the diffusion network converges as the influence spread, which indicates the effectiveness of the convergence criteria.

\smallskip
\noindent \textit{\textbf{Exp7 - Case study with large $k$}}. 
Here, we conduct a case study on the two largest datasets, RPG and Orkut. Our goal is to show how the statistics of the diffusion network produced by \outtermethod{} and influence spread of seed nodes in this network will change when we further increase $k$ up to 50. Specifically, we use the \emph{influence ratio} (I-Ratio) to denote the percentage of the influence of seeds in the diffusion network over their full influence in the original network, \emph{edge ratio} (E-Ratio) and \emph{node ratio} (N-Ratio) to denote the percentage of the edge and node size of the diffusion network over the ones of the original network, respectively. As shown in Figure~\ref{fig:INERatio}, as $k$ increases, \outtermethod{} is able to achieve an I-Ratio of 90\% with only an E-Ratio of 36\%, and an I-Ratio of 94\% with only an E-Ratio of 46\% on RPG and Orkut respectively. These again demonstrate the superiority of \outtermethod{}. 
Furthermore, when the N-Ratio converges, the I-Ratio can still notably increase as $k$ grows, and even the converged N-Ratio can be notably smaller than 1 (e.g., on RPG). These indicate that \outtermethod{} can identify important nodes for spreading the influence.

\subsection{Deployment}\label{sec:deploy}
In order to demonstrate the practical effectiveness of our \short{} problem, we deployed our solution into an activity in a multi-player online battle arena game of Tencent of which the friendship network is MOBAX in Table~\ref{table:dataset}.

\smallskip
\noindent \textbf{Summary of the activity}.
Users join this activity by interacting (e.g., sending gifts and gaming invitation) with their friends recommended by the system and will obtain rewards via such interactions. This form of interaction could trigger domino effects such that the interactions can stimulate the message receiver to further interact with other users. The aim of this activity is to improve user retention and interactions. Note that, in this activity, we do not specifically designate seeds. Instead, every user could be a potential and spontaneous seed. Specifically, users who log in the game will see this activity and may perform three possible behaviors: (1) proactively start interactions as a seed, (2) start interactions as a message receiver (i.e., a non-seed) and (3) ignore this activity. 

\smallskip

\noindent \textbf{Edge weights and the diffusion model}. We adopt the same settings as we use for generating edge weights on MOBA and RPG, and generate the diffusion network based on the IC model. Note that the IC model is only used for generating the diffusion network but not used for evaluation since it is very likely that the real-world influence will not spread in a way specified by the IC model or any other classical diffusion models (e.g., Linear Threshold model~\cite{Kempe03}).

\smallskip
\noindent \textbf{Methods for comparison}. Four methods, namely \method{Random}, \method{Degree}, \method{FoF} and \outtermethod{}, are deployed. For evaluation, we use an online A/B testing. Specifically, it randomly assigns each online user to one of these four methods which recommend a list of friends for each assigned user.
The list size $k$ is set as 20 for all methods because, in this application, only 20 friends are shown due to several reasons (e.g., the screen size and resolution settings of most mobile devices).
Based on our observations made in Section~\ref{sec:exp-closed}, the k-subnetwork generated based on a given seed set could also be effective for different seed sets.
Thus, for \outtermethod{}, we first generate the \ksubnetwork{} with top 10\% of assigned users with the highest degrees as the seeds, and then use this generated \ksubnetwork{} for this activity where every user could be a potential seed or non-seed.

\smallskip
\noindent \textbf{Experimental results}. The number of users who interact with the recommended friends directly indicates the quality of the underlying recommendation system and can effectively reflect the potential size of users who are impacted by others to perform interactions. 
Thus, to evaluate the performance, we adopt the Click-through Rate (CTR) which is a very popular evaluation metric for recommendation systems in industry. The CTR denotes the number of users who interacted with recommended friends divided by the total number of users who log in the game during this activity. 
The CTR achieved by \method{Random}, \method{Degree}, \method{FoF} and \outtermethod{} are 81.87\%, 82.73\%, 82.81\% and \textbf{88.14\%} respectively, which again demonstrates the effectiveness of \outtermethod{}. Despite the effectiveness of CTR, there still exist some other evaluation metrics (e.g., the actual influence spread of seeds via knowing the propagation paths and who are the seeds) worthy of explorations. However, due to the privacy restriction of this activity, we have no access to the detailed information in user logs to compute the performance under other potential metrics. In future, we would like to see how our proposed method will perform in other activities under various evaluation metrics.

\section{Conclusion}
In this paper, we study the problem of Influence Maximization in Closed Social Networks which aims to recommend a limited number of edges for users to propagate information, such that  the seeds' influence via the selected edges is maximized. This problem is shown to be very useful in many industrial applications and we prove the NP-hardness of this problem. Moreover, we further propose a scalable and effective method to augment the diffusion network of seed users. We conduct extensive experiments to demonstrate that our method is very efficient and effective in our problem, a variant of our problem in open social networks and a real-world application. 
As a future direction, we will explore this problem under other diffusion models, and analyze theoretical properties and extension of our solutions.
 
 \smallskip
 \noindent
 \textbf{Acknowledgement}.
Zhifeng Bao is supported in part by ARC Discovery Project DP220101434
and DP200102611.

\newpage
\bibliographystyle{ACM-Reference-Format}
\bibliography{icde2019,library,IMlib,mypublications}

\newpage
\appendix

\end{document}